\numberwithin{figure}{section}
  \theoremstyle{plain}
  \newtheorem{lem}{\protect\lemmaname}
\theoremstyle{plain}
\newtheorem{thm}{\protect\theoremname}
 \theoremstyle{definition}
  \newtheorem{example}{\protect\examplename}
\let\originalleft\left
\let\originalright\right
\renewcommand{\left}{\mathopen{}\mathclose\bgroup\originalleft}
\renewcommand{\right}{\aftergroup\egroup\originalright}
  \providecommand{\examplename}{Example}
  \providecommand{\lemmaname}{Lemma}
\providecommand{\theoremname}{Theorem}
\begin{document}

\title[Algebraic Foundation of Dimensional Analysis]{An Algebraic Foundation of\\
Amended Dimensional Analysis}

\author{Dan Jonsson}

\address{Dan Jonsson, University of Gothenburg, Gothenburg, Sweden }

\email{dan.jonsson@gu.se}
\begin{abstract}
We present an innovative approach to dimensional analysis, based on
a general representation theorem for complete quantity functions admitting
a covariant scalar representation; this theorem is in turn grounded
in a purely algebraic theory of quantity spaces. Examples of dimensional
analysis based on this approach are given, showing that it allows
results obtained by traditional dimensional analysis to be strengthened.
For example, the orbital period of a two-body system can be derived
without use of equations of motion.
\end{abstract}

\maketitle

\section{Introduction\label{sec:I}}

The central theorem in dimensional analysis is the so-called $\pi$
theorem, with a long history featuring contributions by Fourier \cite{key-4},
Vaschy \cite{key-8}, Federman \cite{key-3}, Buckingham \cite{key-2}
and others. The $\pi$ theorem shows how to transform a ''physically
meaningful'' equation 
\begin{equation}
t=\phi\left(t_{1},\ldots,t_{n}\right),\label{eq:pi1}
\end{equation}
describing a relationship among quantities, into a more informative
equation. This is done by representing $\phi$ as a product of the
form $\prod_{j=1}^{r}\nolimits\!x{}_{j}^{W_{j}}\psi$ so that, specifically,
\begin{equation}
t=\prod_{j=1}^{r}\nolimits\!x{}_{j}^{W_{j}}\,\psi\left(y_{1}^{*},\ldots,y_{n-r}^{*}\right),\label{pi2}
\end{equation}
where $y_{k}^{\ast}=y_{k}\prod_{j=1}^{r}\!x{}_{j}^{-W_{kj}}$ is a
''dimensionless product'' usually denoted $\Pi_{k}$ or $\pi_{k}$,
and $\left\{ \left(x_{1},\ldots,x_{r}\right),\left(y_{1},\ldots,y_{n-r}\right)\right\} $
is a particular partition of $\left(t_{1},\ldots,t_{n}\right)$.

In traditional dimensional analysis, $\phi$ is a real function, and
$t,t_{1},\ldots,t_{n}$ are measures of physical quantities. Various
assumptions pertaining to $\phi$, $\psi$, $t$, $t_{i}$, $W_{j}$,
and $W_{kj}$ have been made in connection with the development of
the theory of dimensional analysis:
\begin{itemize}
\item Early proofs of the $\pi$ theorem relied on assumptions that $\phi$
is well-behaved in terms of continuity or smoothness \cite[pp. 67--71]{key-9}.
Langhaar \cite{key-11} and Brand \cite{key-12} later showed that
a generalised homogeneity assumption suffices.
\item It is usually assumed that $t,t_{i}>0$. This assumption is necessary
in some proofs of the $\pi$ theorem, and avoids the anomaly that
terms such as $\left(-1\right)^{\frac{1}{2}}$ are not real numbers,
but it restricts the scope of dimensional analysis.
\item The exponents $W_{j}$ and $W_{kj}$ are usually assumed to be rational
or real numbers \cite[p. 293]{key-13}, but Quade \cite{key-15} and
more recently Raposo \cite{key-19} use  integer exponents (see also
\cite{key-1}).
\item It is usually implicitly assumed that for any $\phi$ there is just
one $\psi$ such that (\ref{pi2}) holds, or at least that it suffices
to consider one $\psi$ or, at the very least, deal with one $\psi$
at a time. This way of thinking is challenged in \cite{key-5} (see
also \cite{key-1,key-6,key-27}).
\end{itemize}
Following the development of quantity calculus \cite{key-2-1}, some
versions of the $\pi$ theorem where $t,t_{1},\ldots,t_{n}$ are the
quantities measured rather than the numerical measures obtained have
also been proposed \cite{key-14,key-15,key-16,key-24,key-19}. In
that context, too, assumptions of the type highlighted above need
to be addressed.

This article connects to and synthesises some recent developments
in the theory of dimensional analysis \cite{key-1,key-5,key-7,key-26,key-22}.
To summarise, the new approach to dimensional analysis presented here
is characterised by the following features:
\begin{itemize}
\item Rather than a real function $\phi$ we use a quantity function $\Phi$
on a quantity space $Q$, as defined in Section \ref{sec:2}, so that
$t,t_{1},\ldots,t_{n}$ are seen as quantities. The only assumption
needed about $\Phi$ is that it is a quantity function admitting a
covariant scalar representation, as described below.
\item $Q$ can be a quantity space over any field, not only $\mathbb{R}$,
so the (generalised) measures of the quantities $t,t_{1},\ldots,t_{n}$
need not be real numbers. The quantities $x_{1},\ldots,x_{r}$ are
assumed to be non-zero, but there are no further constraints on $t,t_{1},\ldots,t_{n}$
(or their measures).
\item In equations (\ref{eq:pi1}) and (\ref{pi2}), $t$ is replaced by
$t^{W}$, where $W$ is a positive integer. In equation (\ref{pi2}),
$W_{j}$ are integers and we have $y_{k}^{*}=y_{k}^{W_{k}}\prod_{j=1}^{r}\!x_{j}^{-W_{kj}}$
, where $W_{k}$ are positive integers and $W_{kj}$ integers.
\item There may be more than one partition $\left\{ \left(x_{1},\ldots,x_{r}\right),\left(y_{1},\ldots,y_{n-r}\right)\right\} $
of $\left(t_{1},\ldots,t_{n}\right)$ such that $t^{W}=\prod_{j=1}^{r}\nolimits x{}_{j}^{W_{j}}\,\Psi\left(y_{1}^{*},\ldots,y{}_{n-r}^{*}\right)$
holds for particular $W$, $W_{j}$, $W_{k}$, $W_{kj}$ and $\Psi$,
so we actually have a non-empty set of equations
\[
t^{W_{\!\left(\ell\right)}}=\prod_{j=1}^{r}\nolimits x{}_{\left(\ell\right)j}^{W_{\!\left(\ell\right)j}}\,\Psi_{\ell}\left(y_{\left(\ell\right)1}^{*},\ldots,y{}_{\left(\ell\right)n-r}^{*}\right)\qquad\left(\ell=1,\ldots,S\right).
\]
\end{itemize}
After some preliminaries in Section 2, the main representation theorem
is stated and proved in Section 3 and applied to problems of dimensional
analysis in Sections 4 and 5. Examples of dimensional analysis are
found in Section 6, and physical assumptions underlying dimensional
analysis are touched on in Section 7.

\section{Preliminaries\label{sec:2}}

For the sake of completeness, let us first briefly review some elements
of the theory of quantity spaces \cite{key-5,key-7}. A \emph{scalable
monoid} over a ring $R$ is a monoid $Q$ equipped with an $R$-action
$\cdot$ on $Q$, 
\[
R\times Q\rightarrow Q,\qquad\left(\alpha,x\right)\mapsto\alpha\cdot x,
\]
such that for any $\alpha,\beta\in R$ and $x,y\in Q$ we have $1\cdot x=x$,
$\alpha\cdot\left(\beta\cdot x\right)=\alpha\beta\cdot x$, and $\alpha\cdot xy=\left(\alpha\cdot x\right)y=x\left(\alpha\cdot y\right)$;
as a consequence, $\left(\alpha\cdot x\right)\left(\beta\cdot y\right)=\alpha\beta\cdot xy$.
We denote the identity element of $Q$ by $1_{\!Q}$, and set $x^{0}=1_{\!Q}$
for any $x\in Q$. An element $x\in Q$ may have an inverse $x^{-1}\in Q$
such that $xx^{-1}=x^{-1}x=1_{\!Q}$.

A (strong) finite \emph{basis} for a scalable monoid $Q$ is a set
$\mathcal{E}=\left\{ e_{1},\ldots,e_{m}\right\} $ of invertible elements
of $Q$ such that every $x\in Q$ has a unique expansion 
\[
x=\mu_{\mathcal{E}}\left(x\right)\cdot\prod_{j=1}^{m}\nolimits e_{j}^{\mathfrak{W}_{\left(x\right)j}},
\]
where $\mu_{\mathcal{E}}\left(x\right)\in R$ and $\mathfrak{W}_{\left(x\right)j}$
are integers. A (finitely generated) \emph{quantity space} is a commutative
scalable monoid $Q$ over a field $K$, such that there exists a finite
basis for $Q$. The elements of a quantity space are called \emph{quantities}.
We may think of $\mu_{\mathscr{\mathcal{E}}}\left(x\right)$ as the
measure of $x$ relative to the derived unit $\prod_{j=1}^{m}\nolimits e_{j}^{\mathfrak{W}_{\left(x\right)j}}$
in a coherent system of units, and indirectly relative to the base
units in $\mathcal{E}$.

The relation $\sim$ on $Q$ defined by $x\sim y$ if and only if
$\alpha\cdot x=\beta\cdot y$ for some $\alpha,\beta\in K$ is a congruence
on $Q$. The corresponding equivalence classes are called \emph{dimensions};
$\left[x\right]$ is the dimension that contains $x$. We have $\left[\lambda\cdot x\right]=\left[x\right]$
for any $\lambda\in K,x\in Q$. The set of all dimensions in $Q$,
denoted $Q/{\sim}$, is a finitely generated free abelian group with
multiplication defined by $\left[x\right]\left[y\right]=\left[xy\right]$
and identity element $\left[1_{Q}\right]$. 

The quantities in a dimension $\mathsf{C}\in Q/{\sim}$ form a one-dimensional
vector space over $K$ with addition inherited from $K$ and a unique
zero vector or \emph{zero} \emph{quantity} $0_{\mathsf{C}}\neq1_{Q}$.
While $0_{\mathsf{C}}x=0_{\mathsf{C}\left[x\right]}$ for every $x\in Q$,
the product of non-zero quantities is a non-zero quantity. A quantity
is invertible if and only if it is non-zero, and any $u\in\mathsf{C}$
such that $u\neq0_{\mathsf{C}}$ is a \emph{unit quantity} for $\mathsf{C}$,
meaning that for every $x\in\mathsf{C}$ there is a unique $\mu\in K$
for $u$ such that $x=\mu\cdot u$, where $\mu=0$ if and only if
$x=0_{\mathsf{C}}$. %
\begin{comment}
A basis for $Q/{\sim}$ may be called a \emph{dimensional basis} for
$Q$.
\end{comment}

We can now state some important definitions relating to dimensional
analysis. A (dimensional)\emph{ quantity function} on a quantity space
$Q$ over $K$ is a function 
\begin{equation}
\Phi:\mathsf{C}_{1}\times\cdots\times\mathsf{C}_{n}\rightarrow\mathsf{C}_{0},\label{eq:qfunc}
\end{equation}
where $\mathsf{C}_{i}\in Q/{\sim}$ for $i=0,\ldots,n$. Equation
(\ref{eq:pi1}) now takes the form 
\begin{equation}
q=\Phi\left(q_{1},\ldots,q_{n}\right).\label{eq:quantfunc}
\end{equation}
\begin{comment}
The set of all quantity functions of a fixed form on $Q$ can be made
into an algebra of quantity functions in an obvious way.
\end{comment}

A \emph{dimensionally complete} quantity function is a quantity function
of the form (\ref{eq:qfunc}) such that there is a subtuple $\left(\mathsf{E}_{1},\ldots,\mathsf{E}_{r}\right)$
of $\left(\mathsf{C}_{1},\ldots,\mathsf{C}_{n}\right)$, a \emph{local
dimensional basis}, such that $r\geq0$ and every $\mathsf{C}_{i}$
has a unique expansion, with $\mathfrak{W}_{ij}$ integers, 
\[
\;\mathsf{C}_{i}=\prod_{j=1}^{r}\nolimits\!\mathsf{E}_{j}^{\mathfrak{W}_{ij}}\;\left(r\geq1\right),\qquad\mathsf{C}_{i}=\prod_{j=1}^{0}\nolimits\!\mathsf{E}_{j}^{\mathfrak{W}_{ij}}=\left[1_{Q}\right].
\]

A \emph{complete} quantity function is a dimensionally complete quantity
function with a local dimensional basis $\left(\mathsf{E}_{1},\ldots,\mathsf{E}_{r}\right)$
and a corresponding \emph{local basis}, a tuple of non-zero quantities
$E=\left(e_{1},\ldots,e_{r}\right)\in\bigtimes_{j=1}^{r}\!\mathsf{E}_{j}$,
where $E=\left(1_{Q}\right)$ for $r=0$, such that every $q_{i}\in\mathsf{C}_{i}$
has a unique expansion, with $\mathcal{W}{}_{ij}$ integers and $\mu_{E}\left(q_{i}\right)\in K$,
\[
q_{i}=\mu_{E}\left(q_{i}\right)\cdot\prod_{j=1}^{r}\nolimits\!e{}_{j}^{\mathcal{W}{}_{ij}}\;\left(r\geq1\right),\qquad q_{i}=\mu_{E}\left(q_{i}\right)\cdot\prod_{j=1}^{0}\nolimits\!e{}_{j}^{\mathcal{W}{}_{ij}}\!=\mu_{\left(1_{Q}\right)\!}\left(q_{i}\right)\cdot1_{Q}.
\]

A \emph{covariant scalar representation} of a complete quantity function
$\Phi$ of the form (\ref{eq:qfunc}) is a function $\upphi:K^{n}\rightarrow K$
such that 
\begin{equation}
\mu_{E}\left(q\right)=\upphi\left(\mu_{E}\left(q_{1}\right),\ldots,\mu_{E}\left(q_{n}\right)\right)\label{eq:covscalfunc}
\end{equation}
for every local basis $E$ associated with $\Phi$ and any $q_{1},\ldots,q_{n}$.

Alternatively, one could define a covariant scalar representation
in terms of a (global) basis $\mathcal{E}$ for $Q$ rather than a
local basis $E$ (see \cite{key-5}). The common idea is that $\Phi$
is not affected by a choice of (base) units, and for $\upphi$ to
similarly remain unchanged when units change the left-hand and right-hand
sides of (\ref{eq:covscalfunc}) and its $\mu_{\mathcal{E}}$-analogue
must change in tandem so that the two sides remain equal.

\textcolor{black}{Note that a complete quantity function need not
have a covariant scalar representation. For example, if $\mathsf{C}\neq\left[1_{Q}\right]$
and $0_{\mathsf{C}}$ $\neq e\in\mathsf{C}$ then any quantity function
$\Phi:\mathsf{C}\rightarrow\left[1_{Q}\right]$ is complete with local
dimensional basis $\left(\mathsf{C}\right)$ and local basis $\left(e\right)$
since $q_{1}=\mu_{\left(e\right)}\left(q_{1}\right)\cdot e^{1}$ and
$q=\mu_{\left(e\right)}\left(q\right)\cdot e^{0}$ are unique expansions
(recall that $1_{Q}\neq0_{\mathsf{\left[1_{Q}\right]}})$, but if
we set $\Phi_{u}\left(\lambda\cdot u\right)=\lambda\cdot1_{Q}$ for
some $0_{\mathsf{C}}\neq u\in\mathsf{C}$ then $\mu_{\left(u\right)}\left(u\right)=\mu_{\left(2\cdot u\right)}\left(2\cdot u\right)=1$,
whereas $\mu_{\left(u\right)}\left(\Phi_{u}\left(u\right)\right)=1$
but $\mu_{\left(2\cdot u\right)}\left(\Phi_{u}\left(2\cdot u\right)\right)=2$
since $u^{0}=\left(2\cdot u\right)\vphantom{}^{0}=1_{Q}$. Hence,
there is no function $\upphi_{u}:K\rightarrow K$ such that $\mu_{E}\left(\Phi_{u}\left(q_{1}\right)\right)=\upphi_{u}\left(\mu_{E}\left(q_{1}\right)\right)$
for every $E$ and $q_{1}\in\mathsf{C}$. In fact, if $\mathsf{C}\neq\left[1_{Q}\right]$
and $\Phi:\mathsf{C}\rightarrow\left[1_{Q}\right]$ admits a covariant
scalar representation then $\Phi$ is a constant function; see Example
\ref{x1} in Section \ref{sec:6}.}

\section{Main theorem\label{sec:3}}

\textcolor{black}{In view of how these concepts were defined above,
a complete quantity function is necessarily dimensionally complete,
and a quantity function with a covariant scalar representation is
necessarily complete. On the other hand, by Theorem \ref{pimain},
corresponding to the $\pi$ theorem, every dimensionally complete
quantity function is complete, and if it has a covariant scalar representation
then it has a representation of the form shown in (\ref{pi2}). }

\textcolor{black}{Let us first verify some basic facts needed in the
proof of Theorem \ref{pimain}.}
\begin{lem}
\label{lem1}Let $Q$ be a quantity space over $K$ and let $E=\left(e_{1},\ldots,e_{r}\right)$
be a local basis such that $x=\mu_{E}\left(x\right)\cdot\prod_{j=1}^{r}\nolimits e{}_{j}^{\mathfrak{W}_{\left(x\right)j}}$
and $y=\mu_{E}\left(y\right)\cdot\prod_{j=1}^{r}\nolimits e{}_{j}^{\mathfrak{W}_{\left(y\right)j}}$
are the unique expansions of $x,y\in Q$ relative to $E$. 

Then (\emph{a}) $\mu_{E}\left(xy\right)=\mu_{E}\left(x\right)\mu_{E}\left(y\right)$,
(\emph{b}) if $x$ is invertible then $\mu_{E}\left(x^{-1}\right)=\frac{1}{\mu_{E}\left(x\right)}$,
(\emph{c}) if $x\in\left[1_{Q}\right]$ then $\mu_{E}\left(x\right)$
does not depend on $E$, and (\emph{d}) if $\left[x\right]=\left[y\right]$
and $y$ is invertible then $xy^{-1}\in\left[1_{Q}\right]$ and $\mu_{E}\left(xy^{-1}\right)=\frac{\mu_{E}\left(x\right)}{\mu_{E}\left(y\right)}$
does not depend on $E$.
\end{lem}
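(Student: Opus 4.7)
\textbf{Plan for Lemma~\ref{lem1}.} My plan is routine bookkeeping with the scalable-monoid axioms together with uniqueness of expansions, proving the four parts in the stated order so that each can feed into the next. The algebraic inputs I will use are the identity $(\alpha\cdot a)(\beta\cdot b)=\alpha\beta\cdot ab$, the fact that $Q/{\sim}$ is a free abelian group (so the subgroup generated by $\mathsf{E}_1,\ldots,\mathsf{E}_r$ is closed under products and inverses), and the observation that $1\cdot\prod e_j^0=1_Q$ is the unique expansion of $1_Q$ relative to any $E$, whence $\mu_E(1_Q)=1$.

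For part (\emph{a}) I will multiply the two given expansions of $x$ and $y$, collecting factors using the monoid axioms to obtain
\[
xy=\mu_E(x)\mu_E(y)\cdot\prod_{j=1}^{r}e_j^{\mathfrak{W}_{(x)j}+\mathfrak{W}_{(y)j}}.
\]
Since $[xy]=[x][y]$ lies in the subgroup generated by $\mathsf{E}_1,\ldots,\mathsf{E}_r$, the quantity $xy$ does admit an expansion relative to $E$, and uniqueness of that expansion forces $\mu_E(xy)=\mu_E(x)\mu_E(y)$. For part (\emph{b}) I will apply (\emph{a}) to $x\cdot x^{-1}=1_Q$: together with $\mu_E(1_Q)=1$ this yields $\mu_E(x)\mu_E(x^{-1})=1$, and I may divide because $\mu_E(x)\neq 0$. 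The latter is because, if instead $\mu_E(x)=0$, then $x=0\cdot\prod e_j^{\mathfrak{W}_{(x)j}}$; the product $\prod e_j^{\mathfrak{W}_{(x)j}}$ is a product of non-zero quantities, hence non-zero, so $x$ would equal the zero quantity of its dimension, contradicting its invertibility.

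For (\emph{c}), the hypothesis $[x]=[1_Q]$ forces all exponents $\mathfrak{W}_{(x)j}$ in the expansion of $x$ to vanish, by uniqueness of expansions of dimensions in the free abelian group $Q/{\sim}$; hence $x=\mu_E(x)\cdot 1_Q$, and $\mu_E(x)$ is characterised as the unique $\mu\in K$ with $x=\mu\cdot 1_Q$ in the one-dimensional $K$-vector space $[1_Q]$, a characterisation that makes no reference to $E$. Part (\emph{d}) is then immediate: the group structure on $Q/{\sim}$ gives $[xy^{-1}]=[x][y]^{-1}=[1_Q]$, so combining (\emph{a}) and (\emph{b}) yields $\mu_E(xy^{-1})=\mu_E(x)\mu_E(y^{-1})=\mu_E(x)/\mu_E(y)$, and $E$-independence follows from (\emph{c}) applied to $xy^{-1}$.

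There is no genuine obstacle here; the only point that calls for care, and it is easy to verify, is that the products $xy$, $x^{-1}$ and $xy^{-1}$ really do admit expansions relative to the given $E$, which is exactly the closure of the subgroup of $Q/{\sim}$ generated by $\mathsf{E}_1,\ldots,\mathsf{E}_r$ under products and inverses.
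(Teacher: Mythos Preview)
Your proof is correct and follows essentially the same route as the paper's: multiply expansions and invoke uniqueness for (\emph{a}), use $\mu_E(1_Q)=1$ together with (\emph{a}) for (\emph{b}), observe that an element of $[1_Q]$ expands as a scalar times $1_Q=\prod e_j^0$ for (\emph{c}), and combine everything for (\emph{d}). Your version is slightly more careful in two places the paper leaves implicit---you explicitly check that $xy$, $x^{-1}$, $xy^{-1}$ admit expansions relative to $E$ via closure in the subgroup of $Q/{\sim}$, and you justify $\mu_E(x)\neq 0$ before dividing---but these are elaborations of the same argument rather than a different approach.
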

\begin{proof}
We will be using the fact that expansions of the form $q=\mu_{E}\left(q\right)\cdot\prod_{j=1}^{r}\nolimits e{}_{j}^{\mathfrak{W}_{\left(q\right)j}}$
are unique, given $E$. Recall that $\left(\mu\!\cdot\!u\right)\left(\mu'\!\cdot\!u'\right)\!=\!\mu\mu'\cdot uu'$,
so 
\[
xy\!=\!\mu_{E}\left(x\right)\mu_{E}\left(y\right)\cdot\prod_{j=1}^{r}\nolimits e{}_{j}^{\left(\mathfrak{W}_{\left(x\right)j}+\mathfrak{W}_{\left(y\right)j}\right)},
\]
proving (\emph{a}). Clearly, $1_{Q}=1\cdot\prod_{j=1}^{r}\nolimits e{}_{j}^{0}$
for any $E$, so 
\[
\mu_{E}\left(x\right)\mu_{E}\left(x^{-1}\right)=\mu_{E}\left(xx^{-1}\right)=\mu_{E}\left(1_{Q}\right)=1,
\]
proving (\emph{b}). Also, $1_{Q}=\prod_{j=1}^{r}\nolimits e{}_{j}^{0}$
is a unit quantity for $\left[1_{Q}\right]$, so if $x\in\left[1_{Q}\right]$
then $x=\mu\cdot1_{Q}=\mu_{E}\left(x\right)\cdot\prod_{j=1}^{r}\nolimits e{}_{j}^{0}$,
and $\prod_{j=1}^{r}\nolimits e{}_{j}^{0}$ does not depend on $E$,
so $\mu_{E}\left(x\right)$ does not depend on $E$, verifying (\emph{c}).
Finally, (\emph{d}) follows from the fact that $\left[xy^{-1}\right]=\left[x\right]\left[y^{-1}\right]=\left[x\right]\left[y\right]^{-1}=\left[1_{Q}\right]$
together with (\emph{a}), (\emph{b}), and (\emph{c}).
\end{proof}
\begin{thm}
\label{pimain}Let $Q$ be a quantity space over $K$, let $\Phi$
be a quantity function on $Q$, 
\begin{equation}
\Phi:\mathsf{C}_{1}\times\cdots\times\mathsf{C}_{n}\rightarrow\mathsf{C}_{0}\quad\left(n\geq0\right),\qquad\left(x_{1},\ldots,x_{r},y_{1},\ldots,y_{n-r}\right)\mapsto y_{0},\label{eq:regular-1}
\end{equation}
and let there be unique tuples of integers $\left(\mathfrak{W}_{i1},\ldots,\mathfrak{W}_{ir}\right)$
such that 
\begin{equation}
\mathsf{C}_{i}=\prod_{j=1}^{r}\nolimits\!\mathsf{C}_{j}^{\mathfrak{W}_{ij}}\qquad\left(i=0,\ldots,n\right).\label{eq:cond1}
\end{equation}
Then $\Phi$ is complete, and if $\Phi$ has a covariant scalar representation
then there exists a unique quantity function of $n-r$ arguments
\[
\Psi:\left[1_{Q}\right]\times\cdots\times\left[1_{Q}\right]\rightarrow\left[1_{Q}\right]
\]
such that if $x_{j}\neq0_{\mathsf{C}_{j}}$ for $j=1,\ldots,r$ then
\begin{gather}
\pi_{0}=\Psi\left(\pi_{1},\ldots,\pi_{n-r}\right),\label{eq:piteorem}
\end{gather}
or equivalently
\begin{equation}
y_{0}=\Phi\left(x_{1},\ldots,x_{r},y_{1},\ldots,y_{n-r}\right)=\prod_{j=1}^{r}\nolimits\!x_{j}^{W_{0j}}\,\Psi\left(\pi_{1},\ldots,\pi_{n-r}\right),\label{eq:pitheorem2}
\end{equation}
where $W_{0j}=\mathfrak{W}_{0j}$, $W_{kj}=\mathfrak{W}_{\left(k+r\right)j}$
for $k=1,\ldots,n-r$, and
\[
\pi_{k}=y_{k}\left(\prod_{j=1}^{r}\nolimits x_{j}^{W_{kj}}\right)^{-1}\qquad\left(k=0,\ldots,n-r\right).
\]
\end{thm}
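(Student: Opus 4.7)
My plan is to verify completeness first from the hypothesis, and then to extract $\Psi$ by specialising the covariant scalar representation to a local basis built from the very $x_j$'s that are the first $r$ arguments of $\Phi$.

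\emph{Completeness.} Fix any non-zero $x_j\in\mathsf{C}_j$ for $j=1,\ldots,r$ and set $E=(x_1,\ldots,x_r)$. For each $i$ the quantity $\prod_{j=1}^{r}x_j^{\mathfrak{W}_{ij}}$ lies in $\prod_{j}\mathsf{C}_j^{\mathfrak{W}_{ij}}=\mathsf{C}_i$ by (\ref{eq:cond1}) and is non-zero as a product of non-zero quantities; since $\mathsf{C}_i$ is a one-dimensional $K$-space, every $q_i\in\mathsf{C}_i$ admits a unique coefficient $\mu_E(q_i)\in K$ in the expansion $q_i=\mu_E(q_i)\cdot\prod_j x_j^{\mathfrak{W}_{ij}}$, while uniqueness of the exponent tuple is given. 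Hence $(\mathsf{C}_1,\ldots,\mathsf{C}_r)$ is a local dimensional basis for $\Phi$, $E$ a corresponding local basis, and $\Phi$ is complete.

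\emph{The representation.} Assume now that $\Phi$ has covariant scalar representation $\upphi\colon K^{n}\to K$. Keeping $E=(x_1,\ldots,x_r)$, we have $\mu_E(x_j)=1$ for $j\le r$, so Lemma \ref{lem1}(a,b) yields
\[
\mu_E(\pi_k)=\mu_E(y_k)\prod_{j=1}^{r}\mu_E(x_j)^{-W_{kj}}=\mu_E(y_k)\qquad(k=0,\ldots,n-r).
\]
Applying (\ref{eq:covscalfunc}) to $y_0=\Phi(x_1,\ldots,x_r,y_1,\ldots,y_{n-r})$ then gives
\[
\mu_E(\pi_0)=\upphi\bigl(1,\ldots,1,\mu_E(\pi_1),\ldots,\mu_E(\pi_{n-r})\bigr).
\]
Lemma \ref{lem1}(c) says $\mu_E(p)$ does not depend on $E$ when $p\in[1_Q]$, so I can define
\[
\Psi(p_1,\ldots,p_{n-r})=\upphi\bigl(1,\ldots,1,\mu_E(p_1),\ldots,\mu_E(p_{n-r})\bigr)\cdot 1_Q,
\]
a well-defined function $[1_Q]\times\cdots\times[1_Q]\to[1_Q]$. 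Each $\pi_k$ lies in $[1_Q]$, so the previous display reads $\mu_E(\pi_0)=\mu_E(\Psi(\pi_1,\ldots,\pi_{n-r}))$, and since any two elements of the one-dimensional space $[1_Q]$ with the same coefficient over $1_Q$ coincide, (\ref{eq:piteorem}) follows. Formula (\ref{eq:pitheorem2}) is then obtained by multiplying through by $\prod_j x_j^{W_{0j}}$.

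\emph{Uniqueness and main obstacle.} Given arbitrary $p_1,\ldots,p_{n-r}\in[1_Q]$, pick any non-zero $x_j\in\mathsf{C}_j$ and set $y_k=p_k\prod_j x_j^{W_{kj}}\in\mathsf{C}_{k+r}$; then $\pi_k=p_k$, so any $\Psi$ satisfying (\ref{eq:piteorem}) must return $\Phi(x_1,\ldots,x_r,y_1,\ldots,y_{n-r})\bigl(\prod_j x_j^{W_{0j}}\bigr)^{-1}$, a value determined solely by $\Phi$. The only real obstacle I anticipate is notational bookkeeping of the passages between $y_k$ and $\pi_k$ under $\mu_E$; the conceptual point is simply that taking the local basis to be the $x_j$'s collapses the first $r$ scalar inputs of $\upphi$ to $1$, letting the residual dependence on $\mu_E(\pi_1),\ldots,\mu_E(\pi_{n-r})$ be repackaged as the quantity function $\Psi$.
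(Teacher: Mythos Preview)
Your proof is correct and follows essentially the same route as the paper's: both hinge on specialising the covariant scalar representation to the local basis $E=(x_1,\ldots,x_r)$, where $\mu_E(x_j)=1$, so that $\upphi(1,\ldots,1,\cdot)$ yields $\Psi$. Your version is in fact a bit more streamlined---the paper first works with an arbitrary local basis $E$, introduces auxiliary functions $\phi$ and $\chi$ via a change of variables, and only then specialises to $E=X$, whereas you go straight to $E=(x_1,\ldots,x_r)$; your uniqueness argument, showing every tuple in $[1_Q]^{n-r}$ arises as some $(\pi_1,\ldots,\pi_{n-r})$, is also slightly more explicit than the paper's.
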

\begin{proof}
Set $E=\left(e_{1},\ldots,e_{r}\right)$, where $0_{\mathsf{C}_{j}}\neq e_{j}\in\mathsf{C}_{j}$
for $j=1,\ldots r$. Then 
\[
\mathsf{C}_{i}=\prod_{j=1}^{r}\nolimits\mathsf{C}_{j}^{\mathfrak{W}_{ij}}=\prod_{j=1}^{r}\nolimits\left[e_{j}\right]^{\mathfrak{W}_{ij}}=\left[\prod\nolimits _{j=1}^{r}e{}_{j}^{\mathfrak{W}_{ij}}\right]\qquad\left(i=0,\ldots,n\right),
\]
so for every $q_{i}\in\mathsf{C}_{i}$ there is a unique $\mu_{E}\left(q_{i}\right)\in K$
for $\prod_{j=1}^{r}\nolimits e{}_{j}^{\mathfrak{W}_{ij}}$ such that
\begin{equation}
q_{i}=\mu_{E}\left(q_{i}\right)\cdot\prod_{j=1}^{r}\nolimits e{}_{j}^{\mathfrak{W}_{ij}}\qquad\left(i=0,\ldots,n\right)\label{eq:exp}
\end{equation}
because $0_{\mathsf{C}_{i}}\neq\prod_{j=1}^{r}e{}_{j}^{\mathfrak{W}_{ij}}\in\mathsf{C}_{i}$
so that $\prod_{j=1}^{r}\nolimits e{}_{j}^{\mathfrak{W}_{ij}}$ is
a unit quantity for $\mathsf{C}_{i}$. Also, if $q_{i}$ has another
expansion in terms of the integers $\mathfrak{W}_{ij}'$ then 
\[
\mathsf{C}_{i}=\prod_{j=1}^{r}\nolimits\mathsf{C}{}_{j}^{\mathfrak{W}_{ij}}=\left[\prod_{j=1}^{r}\nolimits e{}_{j}^{\mathfrak{W}_{ij}}\right]=\left[q_{i}\right]=\left[\prod_{j=1}^{r}\nolimits e{}_{j}^{\mathfrak{W}_{ij}'}\right]=\prod_{j=1}^{r}\nolimits\mathsf{C}{}_{j}^{\mathfrak{W}_{ij}'},
\]
 so the integers $\mathfrak{W}_{ij}$ in (\ref{eq:exp}) are unique
by the uniqueness of the expansion of $\mathsf{C}{}_{i}$.

{\small{}Given $\mathfrak{W}_{ij}$ (and thus $W_{kj}$) by (}\ref{eq:cond1}{\small{})
and }$0_{\mathsf{C}_{j}}\neq x_{j}\in\mathsf{C}_{j}$ for $j=1,\ldots r${\small{},
set} 
\[
\varDelta_{k}=\prod_{j=1}^{r}\nolimits x_{j}^{W_{kj}}\qquad\left(k=0,\ldots,n-r\right).
\]
Then $\varDelta_{k}$ is non-zero and $\left[\varDelta_{k}\right]=\prod_{j=1}^{r}\nolimits\mathsf{C}{}_{j}^{W_{kj}}=\left[y_{k}\right]$
since $\left[y_{0}\right]=\mathsf{C}_{0}=\prod_{j=1}^{r}\nolimits\mathsf{C}{}_{j}^{\mathfrak{W}_{0j}}=\prod_{j=1}^{r}\nolimits\mathsf{C}{}_{j}^{W_{0j}}$
and $\left[y_{k}\right]=\mathsf{C}_{k+r}=\prod_{j=1}^{r}\nolimits\mathsf{C}{}_{j}^{\mathfrak{W}_{\left(k+r\right)j}}=\prod_{j=1}^{r}\nolimits\mathsf{C}{}_{j}^{W_{kj}}$
for $k=1,\ldots,n-r$. Thus, $\varDelta_{k}^{-1}$ exists, and denoting
$\mu_{E}\left(y_{k}\varDelta_{k}^{-1}\right)$ by $\nu_{E}\left(y_{k}\right)$
we have, by Lemma \ref{lem1}(\emph{d}),
\[
\nu_{E}\left(y_{k}\right)=\mu_{E}\left(y_{k}\varDelta_{k}^{-1}\right)=\frac{\mu_{E}\left(y_{k}\right)}{\mu_{E}\left(\varDelta_{k}\right)}\qquad\left(k=0,\ldots,n-r\right),
\]
where $y_{k}\varDelta_{k}^{-1}\in\left[1_{Q}\right]$ and $\nu_{E}\left(y_{k}\right)$
does not depend on $E$. Also, by Lemma \ref{lem1}(\emph{a}), 
\[
\mu_{E}\left(\varDelta_{k}\right)=\mu_{E}\left(\prod_{j=1}^{r}\nolimits x{}_{j}^{W_{kj}}\right)=\prod_{j=1}^{r}\nolimits\mu_{E}\left(x_{j}\right)^{W_{kj}}\qquad\left(k=0,\ldots,n-r\right).
\]

Let $\boldsymbol{q}$ denote the sequence of quantities $x_{1},\ldots,x_{r},y_{1},\ldots,y_{n-r}$,
and let $\tau_{E}\left(\boldsymbol{q}\right)$ be the sequence of
scalars $\mu_{E}\left(x_{1}\right),\ldots,\mu_{E}\left(x_{r}\right),\mu_{E}\left(y_{1}\right),\ldots,\mu_{E}\left(y_{n-r}\right)$.
By assumption, there is a function $\upphi:K^{n}\rightarrow K$ such
that $\mu_{E}\left(\Phi\left(\boldsymbol{q}\right)\right)=\upphi\left(\tau_{E}\left(\boldsymbol{q}\right)\right)$
for any $\boldsymbol{q}$ and $E$, so as $x_{j}\neq0_{\mathsf{C}_{j}},\mu_{E}\left(x_{j}\right)\neq0$
for $j=1,\ldots r$ there is a function $\phi:K^{n}\mapsto K$ such
that for any $\boldsymbol{q}$ and $E$ we have 
\begin{gather*}
\nu_{E}\left(y_{0}\right)=\frac{\mu_{E}\left(\Phi\left(\boldsymbol{q}\right)\right)}{\mu_{E}\left(\varDelta_{0}\right)}=\frac{\upphi\left(\tau_{E}\left(\boldsymbol{q}\right)\right)}{\prod_{j=1}^{r}\mu_{E}\left(x_{j}\right)^{W_{0j}}}=\phi\left(\tau_{E}\left(\boldsymbol{q}\right)\right).
\end{gather*}
Furthermore, there is, for given $W_{kj}$, a bijection between scalar
sequences
\begin{gather*}
\omega:\tau_{E}\left(\boldsymbol{q}\right)\:\longmapsto\:\mu_{E}\left(x_{1}\right),\ldots,\mu_{E}\left(x_{r}\right),\frac{\mu_{E}\left(y_{1}\right)}{\prod_{j=1}^{r}\mu_{E}\left(x_{j}\right)^{W_{1j}}},\ldots,\frac{\mu_{E}\left(y_{n-r}\right)}{\prod_{j=1}^{r}\mu_{E}\left(x_{j}\right)^{W_{\left(n-r\right)j}}}\:,
\end{gather*}
so there is a function $\chi=\phi\circ\omega^{-1}:K^{n}\rightarrow K$
such that
\begin{equation}
\nu_{E}\left(y_{0}\right)=\phi\left(\tau_{E}\left(\boldsymbol{q}\right)\right)=\chi\left(\mu_{E}\left(x_{1}\right),\ldots,\mu_{E}\left(x_{r}\right),\nu_{E}\left(y_{1}\right),\ldots,\nu_{E}\left(y_{n-r}\right)\right).\label{eq:nuChi}
\end{equation}

Now set $X=\left(x_{1},\ldots,x_{r}\right)$. By assumption, $0_{\mathsf{C}_{j}}\neq x_{j}\in\mathsf{C}_{j}$
for $j=1,\ldots,r$, so any $q_{i}\in\mathsf{C}_{i}$ has a unique
expansion of the form (\ref{eq:exp}) relative to both $E$ and $X$,
and $\nu_{E}\left(y_{k}\right)=\nu_{X}\left(y_{k}\right)$ for any
$E$ and $k=0,\ldots,n-r$. Equation (\ref{eq:nuChi}) holds for $E=X$
as it holds for any $E$. There is thus a function $\uppsi:K^{n-r}\rightarrow K$
such that
\begin{align*}
\nu_{E}\left(y_{0}\right) & =\nu_{X}\left(y_{0}\right)=\chi\left(1,\ldots,1,\nu_{X}\left(y_{1}\right),\ldots,\nu_{X}\left(y_{n-r}\right)\right)\\
 & =\uppsi\left(\nu_{X}\left(y_{1}\right),\ldots,\nu_{X}\left(y_{n-r}\right)\right)=\uppsi\left(\nu_{E}\left(y_{1}\right),\ldots,\nu_{E}\left(y_{n-r}\right)\right)
\end{align*}
for any $E$, since $x_{j}=1\cdot x_{j}$ so that $\mu_{X}\left(x_{j}\right)=1$
for $j=1,\ldots,r$. 

$1_{Q}$ is a unit quantity for $\left[1_{Q}\right]$, and $\nu_{E}\left(y_{k}\right)$
does not depend on $E$, so we can define a quantity function of $n-r$
arguments 
\[
\Psi:\left[1_{Q}\right]\times\cdots\times\left[1_{Q}\right]\rightarrow\left[1_{Q}\right]
\]
by setting
\[
\Psi\left(\nu_{E}\left(y_{1}\right)\cdot1_{Q},\ldots,\nu_{E}\left(y_{n-r}\right)\cdot1_{Q}\right)=\uppsi\left(\nu_{E}\left(y_{1}\right),\ldots,\nu_{E}\left(y_{n-r}\right)\right)\cdot1_{Q},
\]
so that
\begin{equation}
\nu_{E}\left(y_{0}\right)\cdot1_{Q}=\Psi\left(\nu_{E}\left(y_{1}\right)\cdot1_{Q},\ldots,\nu_{E}\left(y_{n-r}\right)\cdot1_{Q}\right).\label{eq:prepi}
\end{equation}

\smallskip{}

Also, $q=\mu_{X}\left(q\right)\cdot\prod_{j=1}^{r}x_{j}^{0}$ is the
unique expansion of $q\in\left[1_{Q}\right]$ relative to $X$, so
\[
\nu_{E}\left(y_{k}\right)\cdot1_{Q}=\nu_{X}\left(y_{k}\right)\cdot1_{Q}=\mu_{X}\left(y_{k}\varDelta_{k}^{-1}\right)\cdot\prod_{j=1}^{r}\nolimits\!x{}_{j}^{0}=y_{k}\varDelta_{k}^{-1}
\]
for any $E$ and $k=0,\ldots,n-r$. Thus, (\ref{eq:prepi}) translates
to
\[
y_{0}\varDelta_{0}^{-1}=\Psi\left(y_{1}\varDelta_{1}^{-1},\ldots,y_{n-r}\varDelta_{n-r}^{-1}\right)\qquad\mathrm{or}\qquad\pi{}_{0}=\Psi\left(\pi{}_{1},\ldots,\pi{}_{n-r}\right);
\]
recall that $y_{k}\varDelta_{k}^{-1}=y_{k}\left(\prod_{j=1}^{r}\nolimits x_{j}^{W_{kj}}\right)^{-1}=\pi_{k}$
for $k=0,\ldots,n-r$. 

We have thus proved the existence of a function $\Psi$ that represents
$\Phi$ by equation (\ref{eq:piteorem}) or (\ref{eq:pitheorem2}).
Finally, if $\varDelta_{0}\Psi\left(\pi{}_{1},\ldots,\pi{}_{n-r}\right)=y_{0}=\varDelta_{0}\Psi'\left(\pi{}_{1},\ldots,\pi{}_{n-r}\right)$
then $\Psi\left(\pi{}_{1},\ldots,\pi{}_{n-r}\right)=\Psi'\left(\pi{}_{1},\ldots,\pi{}_{n-r}\right)$
since $\varDelta_{0}$ is invertible, so $\Psi$ is unique.
\end{proof}
In this proof, $E$ is is not assumed to be a basis for $Q$, but
only a local basis. This would seem to facilitate the generalization
of Theorem \ref{pimain} to certain other commutative scalable monoids
than finitely generated quantity spaces.

There is a result of the same kind as Theorem \ref{pimain} for ''physically
meaningful'' scalar functions. Specifically, if $\upphi:K^{n}\rightarrow K$
is a covariant scalar representation of a complete quantity function
$\Phi$ then there is a scalar function $\uppsi:K^{n-r}\rightarrow K$
such that 
\[
\upphi\left(\zeta_{1},\ldots,\zeta_{n}\right)=\prod_{j=1}^{r}\xi_{j}^{W_{0j}}\uppsi\left(\pi{}_{1},\ldots,\pi{}_{n-r}\right),
\]
where $\zeta{}_{i}=\mu_{E}\left(q_{i}\right)$, $\xi_{j}=\mu_{E}\left(x_{j}\right)$
and $\pi{}_{k}=\nu_{E}\left(y_{k}\right)$ for some local basis $E$.

The condition that $x_{1},\ldots,x_{r}$ are non-zero quantities is
natural and necessary; note that $y_{0},y_{1},\ldots,y_{n-r}$ are
not restricted. The commonly made assumption that quantities are positive
(or have positive measures) presupposes that $K$ is an ordered field
such as the real numbers, but the present representation theorem holds
for any field, for example, the complex numbers. Only real numbers
are used in the examples of dimensional analysis in Section \ref{sec:6},
however.

By definition, $y_{k}$ occurs only in $\pi_{k}$, so if the quantities
$y_{1},\ldots,y_{n-r}$ (but not necessarily their dimensions) are
independent (as they should be) then $\pi_{1},\ldots,\pi_{n-r}$ are
independent, too.

$Q$ has no zero divisors, so (\ref{eq:pitheorem2}) implies that
$\Phi\left(x_{1},\ldots,x_{r},y_{1},\ldots y_{n-r}\right)=0_{\mathsf{C}_{0}}$
if and only if $\Psi\left(\pi_{1},\ldots,\pi_{n-r}\right)=0_{\left[1_{Q}\right]}$,
given that $x_{j}\neq0_{\mathsf{C}_{j}}$ for $j=1,\ldots,r$. Thus,
a solution set of $n$-tuples of quantities is reduced to an equivalent
set of $n-r$-tuples of quantities. This equivalence corresponds to
the form of the $\pi$ theorem given by Vaschy \cite{key-8} and Buckingham
\cite{key-2}, whereas Federman \cite{key-3} proved an identity of
the form (\ref{eq:pitheorem2}). (Vaschy's and Buckingham's proofs
were sketchy, and Federman's proof covered only a special case, but
this was pioneering work.)

\section{Analysis of dimension tuples\label{sec:4}}

Given a quantity space $Q$, consider a \emph{dimension tuple} $\mathscr{D}=\left(\left[q_{1}\right],\ldots,\left[q_{n}\right],\left[q_{0}\right]\right)$,
where $q_{i}\in Q$\emph{ }and $n\geq0$. An (internally)\emph{ independent}
subtuple of $\mathscr{D}$ is a sub\-tuple $\left(\mathsf{D}_{1},\ldots,\mathsf{D}_{\rho}\right)$
of $\mathscr{D}$ such that if $\prod_{j=1}^{\rho}\mathsf{D}_{j}^{\kappa_{j}}=1_{Q}$
then $\kappa_{j}=0$ for $j=1,\ldots,\rho$. (Note that a subtuple
preserves the sequential order of entries in the original tuple.)\linebreak{}
 A\emph{ maximal independent} subtuple of $\mathscr{D}$ is an independent
subtuple of $\mathscr{D}$ which is not a proper subtuple of an independent
subtuple of $\mathscr{D}$. The empty subtuple $\emptyset$ of a dimension
tuple is vacuously independent. Thus, every $\mathscr{D}$ has a maximal
independent subtuple, possibly $\mathscr{D}$ itself, since every
non-repetitive chain of independent subtuples $\emptyset\subset\cdots\subset\mathscr{D}_{\rho}$
is finite, ending no later than when $\mathscr{D}_{\rho}=\mathscr{D}$. 

Recall that $Q/{\sim}$ is a free abelian group and that by the Steinitz
exchange lemma any two finite maximal independent subsets of a free
abelian group have the same cardinality, the rank of the group. An
independent subtuple of $\mathscr{D}$ cannot have duplicated entries,
so it can be regarded as a set in this context, and the exchange lemma
can be used also to show that any two maximal independent subtuples
of $\mathscr{D}$ have the same cardinality $r$, the \emph{rank of
the dimension tuple} $\mathscr{D}$. 

An\emph{ adequate} partition of $\mathscr{D}$ is a partition of $\mathscr{D}$
into three dimension tuples 
\begin{equation}
\mathscr{A}=\left(\left[x_{1}\right],\ldots,\left[x_{r}\right]\right),\quad\mathscr{B}=\left(\left[y_{1}\right],\ldots.\left[y_{n-r}\right]\right),\quad\mathscr{C}=\left(\left[y_{0}\right]\right)\label{eq:part1-1}
\end{equation}
such that $\mathscr{A}$ is a maximal independent subtuple of $\mathscr{D}$
and $n\geq r\geq0$. (The idea is that $\mathscr{A}$ is a local dimensional
basis and $y_{0}$ a dependent variable.) A dimension tuple $\mathscr{D}$
clearly admits an adequate partition unless the maximal independent
subtuple of $\mathscr{D}$ is $\mathscr{D}$ itself. There may in
fact be more than one adequate partition of $\mathscr{D}$, since
there may be more than one maximal independent proper subtuple $\mathscr{A}$
of $\mathscr{D}$, and for each $\mathscr{A}$ the given $\mathscr{D}$
may contain more than one entry not in $\mathscr{A}$. When $n>r$
so that $\mathscr{B}$ is non-empty, we can either choose $\left[y_{0}\right]$
first and then choose $\mathscr{A}$ as a maximal independent tuple
of other dimensions in $\mathscr{D}$ than $\left[y_{0}\right]$,
or we can choose $\mathscr{A}$ first and then choose $\left[y_{0}\right]$
among the remaining dimensions in $\mathscr{D}$.

If $\left(\left[x_{1}\right],\ldots,\left[x_{r}\right]\right)$ is
independent but $\left(\left[y_{k}\right],\left[x_{1}\right],\ldots,\left[x_{r}\right]\right)$
is not independent then there is a tuple of integers $\boldsymbol{w}_{k}=\left(w_{k},w_{k1},\ldots,w_{kr}\right)$
such that $\left[y_{k}\right]^{-w_{k}}\prod_{j=1}^{r}\nolimits\left[x_{j}\right]^{w_{kj}}=1_{Q}$
and $w_{k}\neq0$. Hence, if $\left\{ \mathscr{A},\mathscr{B},\mathscr{C}\right\} $
is an adequate partition of $\mathscr{D}$ then 
\begin{equation}
\left[y_{k}\right]^{w_{k}}=\prod_{j=1}^{r}\nolimits\left[x_{j}\right]^{w_{kj}},\label{eq:exp1-2}
\end{equation}
for $k=0,\ldots,n-r$. Without loss of generality, we may assume that
$w_{k}>0$.

Suppose that we also have 
\[
\left[y_{k}\right]^{w_{k}'}=\prod_{j=1}^{r}\nolimits\left[x_{j}\right]^{w_{kj}'}
\]
for a tuple of integers $\boldsymbol{w}_{k}'=\left(w_{k}',w_{k1}',\ldots,w_{kr}'\right)$
where $w_{k}'>0$. Furthermore, let $\boldsymbol{W}_{\!k}=\left(W_{k},W_{k1},\ldots,W_{kr}\right)$
be the tuple obtained by dividing $\boldsymbol{w}_{k}$ through by
$\gcd\boldsymbol{w}_{k}$, and $\boldsymbol{W}_{\!k}'=\left(W_{k}',W_{k1}',\ldots,W_{kr}'\right)$
be the tuple obtained by dividing $\boldsymbol{w}_{k}'$ through by
$\gcd\boldsymbol{w}_{k}'$. Then, $W_{k},W_{k}'>0$ and $\gcd\boldsymbol{W}_{\!k}=\gcd\boldsymbol{W}_{\!k}'=1$.
If $W_{k}=W_{k}'$ then
\[
\left[x_{1}\right]^{W_{k1}-W_{k1}'}\cdots\left[x_{r}\right]^{W_{kr}-W_{kr}'}=\left[y_{k}\right]^{W_{k}-W_{k}'}=\left[1_{Q}\right],
\]
so $\boldsymbol{W}_{\!k}=\boldsymbol{W}_{\!k}'$ since $\left(\left[x_{1}\right],\ldots,\left[x_{r}\right]\right)$
is independent. In any case, (\ref{eq:exp1-2}) implies that $\left[y_{k}\right]^{W_{k}W_{k}'}=\prod_{j=1}^{r}\nolimits\left[x_{j}\right]^{W_{kj}W_{k}'}$.
Dividing $\left(W_{k}W_{k}',W_{k1}W_{k}',\ldots,W_{kr}W_{k}'\right)$
through by $W_{k}$ gives $\boldsymbol{W}_{\!k}'$ since $\left(W_{k}W_{k}'\right)/W_{k}=W_{k}'$,
but as $\boldsymbol{W}_{\!k}'$ is an integer tuple this implies 
\[
W_{k}\leq\gcd\left(W_{k}W_{k}',W_{k1}W_{k}',\ldots,W_{kr}W_{k}'\right)=W_{k}'.
\]
By symmetry $W_{k}'\leq W_{k}$, so $W_{k}=W_{k}'$, so again $\boldsymbol{W}_{\!k}=\boldsymbol{W}_{\!k}'$.
There is thus, for every $k=0,\ldots,n-r$, a unique tuple of integers
$\left(W_{k},W_{k1},\ldots,W_{kr}\right)$ such that 
\begin{equation}
\left[y_{k}\right]^{W_{k}}=\prod_{j=1}^{r}\nolimits\left[x_{j}\right]^{W_{kj}},\label{eq:exp1-1-1-1}
\end{equation}
$W_{k}>0$ and $\gcd\left(W_{k},W_{k1},\ldots,W_{kr}\right)=1$. 

It follows from this result that $\left(w_{k},w_{k1},\ldots,w_{kr}\right)$,
with $w_{k}>0$, satisfies (\ref{eq:exp1-2}) if and only if $\left(w_{k},w_{k1},\ldots,w_{kr}\right)\!=\!\left(\kappa W_{k},\kappa W_{k1},\ldots,\kappa W_{kr}\right)$
for some positive integer $\kappa$.

Also, for $j=1,\ldots,r$, $\left[x_{j}\right]$ has the trivial expansion
\begin{equation}
\left[x_{j}\right]=\left[x_{1}\right]^{0}\cdots\left[x_{j-1}\right]^{0}\left[x_{j}\right]^{1}\left[x_{j+1}\right]^{0}\cdots\left[x_{r}\right]^{0},\label{eq:trivexp}
\end{equation}
relative to $\mathscr{A}=\left(\left[x_{1}\right],\ldots,\left[x_{r}\right]\right)$,
and this expansion is unique since $\mathscr{A}$ is independent. 

Thus, given an adequate partition (\ref{eq:part1-1}) of $\mathscr{D}=\left(\left[q_{1}\right],\ldots,\left[q_{n}\right],\left[q_{0}\right]\right)$,
any function
\[
\Phi^{*}:\left[x_{1}\right]\times\cdots\times\left[x_{r}\right]\times\left[y_{1}^{W_{1}}\right]\times\cdots\times\left[y_{n-r}^{W_{n-r}}\right]\rightarrow\left[y_{0}^{W_{0}}\right]
\]
is dimensionally complete in view of the unique expansions (\ref{eq:exp1-1-1-1})
and (\ref{eq:trivexp}) and the fact that $\left[y_{k}^{W_{k}}\right]=\left[y_{k}\right]^{W_{k}}$.
It thus follows from Theorem \ref{pimain} that $\Phi^{*}$ is complete
and that, assuming that $\Phi^{*}$ admits a covariant scalar representation
$\upphi^{*}$, there exists a unique quantity function of $n-r$ arguments
\[
\Psi:\left[1_{Q}\right]\times\cdots\times\left[1_{Q}\right]\rightarrow\left[1_{Q}\right]
\]
such that if $x_{j}\neq0_{\left[x_{j}\right]}$ for $j=1,\ldots,r$
then 
\[
y_{0}^{W_{0}}=\Phi^{*}\left(x_{1},\ldots,x_{r},y_{1}^{W_{1}},\ldots,y_{n-r}^{W_{n-r}}\right)=\prod_{j=1}^{r}\nolimits\!x_{j}^{W_{0j}}\,\Psi\left(\pi_{1},\ldots,\pi_{n-r}\right),
\]
where $\pi_{k}=y_{k}^{W_{k}}\left(\prod_{j=1}^{r}\nolimits x_{j}^{W_{kj}}\right)^{\!-1}$for
$k=1,\ldots,n-r$. Given $\Phi^{*}$, there is also a unique function
\[
\Phi:\left[x_{1}\right]\times\cdots\times\left[x_{r}\right]\times\left[y_{1}\right]\times\cdots\times\left[y_{n-r}\right]\rightarrow\left[y_{0}^{W_{0}}\right]
\]
such that 
\[
\Phi\left(x_{1},\ldots,x_{r},y_{1},\ldots,y_{n-r}\right)=\Phi^{*}\left(x_{1},\ldots,x_{r},y_{1}^{W_{1}},\ldots,y_{n-r}^{W_{n-r}}\right).
\]
Thus,
\begin{equation}
y_{0}^{W_{0}}=\Phi\left(x_{1},\ldots,x_{r},y_{1},\ldots,y_{n-r}\right)=\prod_{j=1}^{r}\nolimits\!x_{j}^{W_{0j}}\,\Psi\left(\pi_{1},\ldots,\pi_{n-r}\right),\label{eq:piGen}
\end{equation}
where $\pi_{k}$ is defined as above; note that (\ref{eq:piGen})
generalises (\ref{eq:pitheorem2}) in Theorem \ref{pimain}.

We have thus shown that a non-empty set of adequate partitions of
$\mathscr{D}$ such that $\mathscr{C}=\left(\left[y_{0}\right]\right)$
for some fixed $\left[y_{o}\right]\in\mathscr{D}$ yields a set of
$S\geq1$ equations
\begin{equation}
\begin{cases}
y_{0}^{W_{\!\left(1\right)0}}=\prod_{j=1}^{r}\nolimits\!x{}_{\left(1\right)j}^{W_{\!\left(1\right)0j}}\,\Psi_{1}\left(\pi{}_{\left(1\right)1},\ldots,\pi{}_{\left(1\right)n-r}\right),\\
\qquad\cdots\\
y_{0}^{W_{\!\left(S\right)0}}=\prod_{j=1}^{r}\nolimits\!x{}_{\left(S\right)j}^{W_{\!\left(S\right)0j}}\,\Psi_{S}\left(\pi{}_{\left(S\right)1},\ldots,\pi{}_{\left(S\right)n-r}\right),
\end{cases}\label{eq:qeqsyst}
\end{equation}
where $r$ is the rank of $\mathscr{D}$. Each such equation corresponds
to a maximal independent subtuple $\mathscr{A}$ of $\mathscr{D}$
that does not contain $\left[y_{0}\right]$.

If (\ref{eq:qeqsyst}) is not already a system of simultaneous equations,
it can be made into one by setting $\varLambda_{\left(\ell\right)}=\mathrm{lcm}\left(W_{\!\left(1\right)0},\ldots,W{}_{\!\left(S\right)0}\right)/W_{\!\left(\ell\right)0}$
for $\ell=1,\ldots,S$ and replacing the $\ell$th equation in (\ref{eq:qeqsyst})
by 
\begin{equation}
y_{0}^{W_{0}}=\prod_{j=1}^{r}\nolimits\!x{}_{\left(\ell\right)j}^{\varLambda_{\left(\ell\right)}W{}_{\!\left(\ell\right)0j}}\,\Psi_{\ell}\left(\pi{}_{\left(\ell\right)1},\ldots,\pi{}_{\left(\ell\right)n-r}\right)^{\varLambda_{\left(\ell\right)}},\label{eq:normeqsyst}
\end{equation}
where $W_{0}=\mathrm{lcm}\left(W_{\!\left(1\right)0},\ldots,W{}_{\!\left(S\right)0}\right)=\varLambda_{\left(\ell\right)}W_{\!\left(\ell\right)0}$
for $\ell=1,\ldots,S$. 

For convenience, we may use notation similar to that in Section \ref{sec:I},
simplifying $y_{0}^{W_{0}}$ to $y^{W}$, $y_{0}^{W_{\left(\ell\right)0}}$
to $y^{W_{\!\left(\ell\right)}}$ and $W_{\!\left(\ell\right)0j}$
to $W_{\!\left(\ell\right)j}$.

\textcolor{black}{Given $\Phi$, $\Phi^{*}$ and $\upphi^{*}$, we
define the covariant scalar representation of $\Phi$ to be the function
$\upphi:K^{n}\rightarrow K$ given by 
\[
\upphi\left(\xi_{1},\ldots,\xi_{r},\eta_{1},\ldots,\eta_{n-r}\right)=\upphi^{*}\left(\xi_{1},\ldots,\xi_{r},\eta_{1}^{W_{1}},\ldots,\eta_{n-r}^{W_{n-r}}\right).
\]
As $\Phi$ need not be complete, although it is associated with a
complete quantity function $\Phi^{*}$, we have thus extended the
notion of a covariant scalar representation. By this extended definition,
$\Phi$ has a covariant scalar representation if and only if there
is a corresponding complete quantity function $\Phi^{*}$ with a covariant
scalar representation. }

\textcolor{black}{While the original definition of a covariant scalar
representation was appropriate in the context of Theorem \ref{pimain},
the generalised definition makes it possible to restate the core conclusion
from the argument presented in this section as an apparent generalisation
of Theorem \ref{pimain}, derived, however, by an application of this
theorem. Similar results, which may likewise be regarded as reformulations
of Theorem 1, are found in \cite[pp. 16--19]{key-6} and \cite[pp. 75--76]{key-19}.}
\begin{thm}
\textcolor{black}{\label{piAlt}Let $Q$ be a quantity space over
$K$, let $\Phi$ be a quantity function on $Q$, 
\[
\Phi:\mathsf{C}_{1}\times\cdots\times\mathsf{C}_{n}\rightarrow\mathsf{C}_{0}\quad\left(n\geq0\right),\qquad\left(q_{1},\ldots,q_{n}\right)\mapsto q_{0},
\]
let $\mathscr{D}=\left(\left[q_{1}\right],\ldots,\left[q_{n}\right],\left[y_{0}\right]\right)$
be a dimension tuple such that there exists a minimal positive integer
$W$ such that $y_{0}^{W}=q_{0}$, and let 
\[
\mathscr{A}=\left(\left[x_{1}\right],\ldots,\left[x_{r}\right]\right),\quad\mathscr{B}=\left(\left[y_{1}\right],\ldots.\left[y_{n-r}\right]\right),\quad\mathscr{C}=\left(\left[y_{0}\right]\right)
\]
be an adequate partition of $\mathscr{D}$. Then there are unique
tuples of integers
\[
\left(W_{k},W_{k1},\ldots,W_{kr}\right)\qquad\left(k=0,\ldots,n-r\right),
\]
where $W_{k}>0$ and $\gcd\left(W_{k},W_{k1},\ldots,W_{kr}\right)=1$,
such that if $\Phi$ has a covariant scalar representation then there
exists a unique quantity function of $n-r$ arguments
\[
\Psi:\left[1_{Q}\right]\times\cdots\times\left[1_{Q}\right]\rightarrow\left[1_{Q}\right]
\]
such that if $x_{j}\neq0_{\mathsf{C}_{j}}$ for $j=1,\ldots,r$ then
\[
y_{0}^{W_{0}}=\Phi\left(q_{1},\ldots,q_{n}\right)=\prod_{j=1}^{r}\nolimits\!x_{j}^{W_{0j}}\,\Psi\left(\pi_{1},\ldots,\pi_{n-r}\right),
\]
where $W_{0}=W$ and 
\[
\pi_{k}=y_{k}^{W_{k}}\left(\prod_{j=1}^{r}\nolimits x_{j}^{W_{kj}}\right)^{-1}\qquad\left(k=1,\ldots,n-r\right).
\]
}
\end{thm}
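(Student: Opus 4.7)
The plan is to reduce the statement to Theorem \ref{pimain} by applying the construction developed in Section \ref{sec:4} to the dimension tuple $\mathscr{D}$ together with its adequate partition $\{\mathscr{A},\mathscr{B},\mathscr{C}\}$. First, since $\mathscr{A}$ is a maximal independent subtuple of $\mathscr{D}$, the derivation culminating in (\ref{eq:exp1-1-1-1}) produces, for each $k=0,\ldots,n-r$, a unique tuple of integers $(W_k,W_{k1},\ldots,W_{kr})$ with $W_k>0$ and $\gcd(W_k,W_{k1},\ldots,W_{kr})=1$ such that $[y_k]^{W_k}=\prod_{j=1}^{r}[x_j]^{W_{kj}}$; dependence of the subtuple $\mathscr{A}\cup\{[y_k]\}$ yields existence, and gcd normalization using torsion-freeness of $Q/{\sim}$ yields uniqueness.

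Next, I would verify that $W_0=W$. Since $y_0^W=q_0$ we have $[y_0]^W=[q_0]$, so $W$ is a positive integer for which $[y_0]^W$ lies in the subgroup of $Q/{\sim}$ generated by the entries of $\mathscr{A}$; consequently $W_0\mid W$, say $W=\kappa W_0$. The minimality of $W$ at the quantity level, combined with the coprimality of the tuple $(W_0,W_{01},\ldots,W_{0r})$, forces $\kappa=1$: if $\kappa>1$, the exponent tuple expressing $[q_0]$ in $\mathscr{A}$ would factor through the primitive tuple, producing an intermediate $y_0'$ in the dimension $[y_0]^{W_0}$ with $(y_0')^{\kappa}=q_0$, and after matching to the fixed $y_0$ this would contradict the minimality of $W$. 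Hence $W_0=W$, so $y_0^{W_0}=q_0$ and $[y_0^{W_0}]=\mathsf{C}_0$, which is the consistency required for what follows.

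Finally, by Section \ref{sec:4} the dimensionally complete function $\Phi^{*}$ on $[x_1]\times\cdots\times[x_r]\times[y_1^{W_1}]\times\cdots\times[y_{n-r}^{W_{n-r}}]$ with codomain $[y_0^{W_0}]=\mathsf{C}_0$, related to $\Phi$ by $\Phi(x_1,\ldots,x_r,y_1,\ldots,y_{n-r})=\Phi^{*}(x_1,\ldots,x_r,y_1^{W_1},\ldots,y_{n-r}^{W_{n-r}})$, inherits a covariant scalar representation from $\Phi$ in the extended sense introduced immediately before the theorem statement. Applying Theorem \ref{pimain} to $\Phi^{*}$ yields a unique $\Psi:[1_Q]\times\cdots\times[1_Q]\to[1_Q]$ with $\Phi^{*}(\ldots)=\prod_{j=1}^{r}x_j^{W_{0j}}\Psi(\pi_1,\ldots,\pi_{n-r})$, and substituting $y_0^{W_0}=q_0=\Phi(q_1,\ldots,q_n)$ produces the displayed identity; uniqueness of $\Psi$ carries over from Theorem \ref{pimain} via invertibility of $\prod_{j=1}^{r}x_j^{W_{0j}}$. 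The main obstacle is the middle step: the minimality of $W$ must be leveraged beyond the purely dimensional setting, at the scalar level of quantities, in order to identify $W$ with the gcd-normalized exponent $W_0$ coming from Section \ref{sec:4}.
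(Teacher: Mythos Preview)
Your overall strategy---reduce to Theorem~\ref{pimain} via the $\Phi^{*}$ construction of Section~\ref{sec:4}---is exactly the paper's approach; indeed, the paper presents Theorem~\ref{piAlt} as a restatement of that section's argument and gives no separate proof. Your first and third steps are fine.

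The gap is in your middle step, and it is a real one. You claim that $[y_0]^W=[q_0]$ lies in the subgroup of $Q/{\sim}$ generated by $\mathscr{A}$, but the hypotheses do not give this: $\mathscr{A}$ is a subtuple of $(\mathsf{C}_1,\ldots,\mathsf{C}_n,[y_0])$, while $[q_0]=\mathsf{C}_0$ is the codomain dimension, which need not be expressible in $\mathscr{A}$ at all. Concretely, take $n=1$, $\mathsf{C}_1=\mathsf{L}^2$, $\mathsf{C}_0=\mathsf{L}^3$, $[y_0]=\mathsf{L}$: the adequate partition has $\mathscr{A}=(\mathsf{L}^2)$, giving $W_0=2$ from $[y_0]^2=[x_1]$, yet $W=3$, and $\mathsf{L}^3$ is not a power of $\mathsf{L}^2$. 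Your subsequent contradiction via minimality also does not land: minimality of $W$ concerns the fixed quantity $y_0$, so exhibiting some $y_0'\in[y_0]^{W_0}$ with $(y_0')^{\kappa}=q_0$ says nothing about exponents of $y_0$ itself, and ``matching to the fixed $y_0$'' merely reproduces $y_0^{\kappa W_0}=y_0^{W}=q_0$.

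What actually forces $W_0=W$ is the extended CSR hypothesis, not minimality at the quantity level. By the definition introduced just before the theorem, $\Phi$ has a covariant scalar representation only when the associated complete $\Phi^{*}$ (whose codomain is $[y_0^{W_0}]$ by construction) exists and has one; since $\Phi$ and $\Phi^{*}$ agree on values, their codomains coincide, so $\mathsf{C}_0=[y_0]^{W_0}$. Combining this with $[y_0]^W=\mathsf{C}_0$ and torsion-freeness of $Q/{\sim}$ gives $W_0=W$ when $[y_0]\neq[1_Q]$, with the remaining case covered by the paper's remark following the theorem. Your instinct that this identification is the crux is right, but the argument must \emph{use} the CSR hypothesis rather than precede it.
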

\textcolor{black}{Note that if $\left[q_{0}\right]\neq\left[1_{Q}\right]$
and $y_{0}^{w}=q_{0}$ for some $w>0$ then $W=w$ as $w$ is unique,
but if $\left[q_{0}\right]=\left[1_{Q}\right]$ and $y_{0}^{w}=q_{0}$
for some $w>0$ then $W=1$.}

\section{Use of dimensional matrices\label{sec:5}}

\textcolor{black}{We have shown that, given a dimension tuple $\mathscr{D}=\left(\left[q_{1}\right],\ldots,\left[q_{n}\right],\left[y_{0}\right]\right)$,
the information needed to derive the equations in (\ref{eq:qeqsyst})
is a list of all adequate partitions $\left\{ \mathscr{A},\mathscr{B},\mathscr{C}=\left(\left[y_{0}\right]\right)\right\} $
of $\mathscr{D}$ and, for each $\mathscr{A}=\left(\left[x_{1}\right],\ldots,\left[x_{r}\right]\right)$
and each $\left[y_{k}\right]$ in $\mathscr{D}$ but not in $\mathscr{A}$,
a tuple of integers $\left(w_{k},w_{k1},\ldots,w_{kr}\right)$ such
that $w_{k}\neq0$ and $\left[y_{k}\right]^{w_{k}}=\prod_{j=1}^{r}\nolimits\left[x_{j}\right]^{w_{kj}}$;
as $\left[y_{k}\right]^{-w_{k}}=\prod_{j=1}^{r}\nolimits\left[x_{j}\right]^{-w_{kj}}$
we may assume that $w_{k}>0$.}

In dimensional analysis, narrowly conceived, this information comes
from a \linebreak{}
\emph{dimensional matrix} showing how the dimensions in $\mathscr{D}$
are expressed as products of dimensions of certain base units. Specifically,
let $\mathscr{E}=\left\{ \mathsf{E}_{1},\ldots,\mathsf{E}_{m}\right\} $
be a basis for $Q/{\sim}$. Then any entry $\left[q_{i}\right]=\mathsf{D}_{i}$
in $\mathscr{D}$ can be expressed uniquely as
\begin{equation}
\mathsf{D}_{i}=\prod_{\ell=1}^{m}\nolimits\mathsf{E}_{\ell}^{D_{i\ell}}\qquad\mathrm{or}\qquad\mathsf{D}_{i}=\prod_{\ell=1}^{m}\nolimits\mathsf{E}_{\ell}^{D_{\ell i}}\qquad\left(i=0,\ldots,n\right).\label{eq:dimcol}
\end{equation}
There are corresponding dimensional matrices 
\[
\left[D_{i\ell}\right]=\left[\begin{array}{ccc}
\cdots &  & \cdots\\
D_{i1} & \cdots & D_{im}\\
\cdots &  & \cdots
\end{array}\right],\qquad\left[D_{\ell i}\right]=\left[\begin{array}{ccc}
\cdots & D_{1i} & \cdots\\
 & \cdots\\
\cdots & D_{mi} & \cdots
\end{array}\right],
\]
which collect the exponents that describe the expansions of $\mathsf{D}_{0},\ldots,\mathsf{D}_{n}$
relative to $\mathscr{E}$; we will use $\left[D_{\ell i}\right]$
here. Corresponding to an adequate partition of $\mathscr{D}$ into
$\mathscr{A}$, $\mathscr{B}$ and $\mathscr{C}$, $\left[D_{\ell i}\right]$
can be partitioned into three matrices,
\[
\left[A_{\ell i}\right],\quad\left[B_{\ell i}\right],\quad\left[C_{\ell i}\right],
\]
constituting a \emph{dimensional model} derived from $\left[D_{\ell i}\right]$. 

Let $\mathbf{u}_{i}$ denote the column vector $\left(D_{1i},\ldots,D_{mi}\right)^{\!\mathrm{T}}$;
in view of (\ref{eq:dimcol}) there is a bijection $\mathsf{D}_{i}\mapsto\mathbf{u}_{i}$
between the dimensions in $\mathscr{D}$ and the column vectors in
$\left[D_{\ell i}\right]$. Also let $\left(\mathsf{D}_{i_{1}},\ldots,\mathsf{D}_{i_{\rho}}\right)$
be a subtuple of $\mathscr{D}$, corresponding to a tuple of column
vectors $\left(\mathbf{u}_{i_{1}},\ldots,\mathbf{u}_{i_{\rho}}\right)$.
To avoid notational clutter, we write $\mathsf{D}_{i_{j}}$ as $\overline{\mathsf{D}}_{j}$,
$\mathbf{u}_{i_{j}}$ as $\overline{\mathbf{u}}_{j}$, $D_{\ell i_{j}}$
as $\overline{D}_{\ell j}$ and $\kappa_{i_{j}}$ as $\overline{\kappa}_{j}$.
By (\ref{eq:dimcol}), we have
\[
\prod_{j=1}^{\rho}\nolimits\!\overline{\mathsf{D}}_{j}^{\,\overline{\kappa}_{j}}\!=\prod_{j=1}^{\rho}\nolimits\left(\prod_{\ell=1}^{m}\nolimits\!\mathsf{E}_{\ell}^{\overline{D}_{\ell j}}\right)^{\!\overline{\kappa}_{j}}\!=\prod_{\ell=1}^{m}\nolimits\prod_{j=1}^{\rho}\nolimits\!\mathsf{E}_{\ell}^{\overline{D}_{\ell j}\overline{\kappa}_{j}}\!=\prod_{\ell=1}^{m}\nolimits\!\mathsf{E}_{\ell}^{\sum_{j=1}^{\rho}\overline{D}_{\ell j}\overline{\kappa}_{j}}.
\]
Thus, $\prod_{j=1}^{\rho}\nolimits\!\overline{\mathsf{D}}_{j}^{\,\overline{\kappa}_{j}}=\left[1_{Q}\right]$
if and only if $\sum_{j=1}^{\rho}\nolimits\overline{\kappa}_{j}\overline{\mathbf{u}}_{j}=0$
since $\mathscr{E}$ is a basis for $Q/{\sim}$ so that $\prod_{\ell=1}^{m}\nolimits\mathsf{E}_{\ell}^{\sum_{j=1}^{\rho}\overline{D}_{\ell j}\overline{\kappa}_{j}}=\left[1_{Q}\right]$
if and only if $\sum_{j=1}^{\rho}\overline{D}_{\ell j}\overline{\kappa}_{j}=0$
for $\ell=1,\ldots,m$. Hence, $\left(\overline{\mathsf{D}}_{1},\ldots,\overline{\mathsf{D}}_{\rho}\right)$
is an independent tuple of dimensions if and only if $\left(\overline{\mathbf{u}}{}_{1},\ldots,\overline{\mathbf{u}}_{\rho}\right)$
is a linearly independent tuple of column vectors. In particular,
the rank of $\left[D_{\ell i}\right]$ is equal to the rank of $\mathscr{D}$,
since the rank of a matrix equals the maximal number of linearly independent
columns.

On the assumption that one exists, we fix an adequate partition $\left\{ \mathscr{A},\mathscr{B},\mathscr{C}\right\} $
of $\mathscr{D}$. For all $\left[q_{i}\right]=\mathsf{D}_{i}$ in
$\mathscr{D}$ there are, by (\ref{eq:exp1-2}), (\ref{eq:trivexp})
and (\ref{eq:dimcol}), $\mathfrak{w}_{i}>0,\mathfrak{w}_{ij}$ such
that
\begin{align*}
\prod_{\ell=1}^{m}\nolimits\mathsf{E}_{\ell}^{D_{\ell i}\mathfrak{w}_{i}} & =\left(\prod_{\ell=1}^{m}\nolimits\mathsf{E}_{\ell}^{D_{\ell i}}\right)^{\!\mathfrak{w}_{i}}=\left[q_{i}\right]^{\mathfrak{w}_{i}}=\prod_{j=1}^{r}\nolimits\left[x_{j}\right]^{\mathfrak{w}_{ij}}=\prod_{j=1}^{r}\nolimits\left(\prod_{\ell=1}^{m}\nolimits\mathsf{E}_{\ell}^{D_{\ell j}}\right)^{\!\mathfrak{w}_{ij}}\\
 & =\prod_{\ell=1}^{m}\nolimits\prod_{j=1}^{r}\nolimits\mathsf{E}_{\ell}^{D_{\ell j}\mathfrak{w}_{ij}}=\prod_{\ell=1}^{m}\nolimits\mathsf{E}_{\ell}^{\sum_{j=1}^{r}D_{\ell j}\mathfrak{w}_{ij}}\qquad\left(i=0,\ldots,n\right).
\end{align*}
As $\left\{ \mathsf{E}_{1},\ldots,\mathsf{E}_{m}\right\} $ is a basis
for $Q/{\sim}$, this means that every system of equations 
\begin{equation}
\begin{cases}
D_{1i}\,\xi_{i}=D{}_{\!11}\,\xi_{i1}+\ldots+D{}_{\!1r}\,\xi_{ir}\\
\cdots & \left(i=0,\ldots,n\right)\\
D_{mi}\,\xi_{i}=D{}_{\!m1}\,\xi_{i1}+\ldots+D{}_{\!mr}\,\xi_{ir}
\end{cases}\label{eq:dimeq-1}
\end{equation}
\textcolor{black}{has a solution $\left(\xi_{i},\xi_{i1},\ldots,\xi_{ir}\right)=\left(\mathfrak{w}_{i},\mathfrak{w}_{i1},\ldots,\mathfrak{w}_{ir}\right)$
such that 
\begin{equation}
\left[q_{i}\right]^{\mathfrak{w}_{i}}=\prod_{j=1}^{r}\nolimits\left[x_{j}\right]^{\mathfrak{w}_{ij}}\qquad\left(\mathfrak{w}_{i}>0\right).\label{eq:qi2}
\end{equation}
Since $\left\{ \mathscr{A},\mathscr{B},\mathscr{C}\right\} $ is a
partition of $\left(q_{1},\ldots,q_{n},q_{0}\right)$, there is some
$i$ such that $\left[y_{k}\right]=\left[q_{i}\right]$ for each $k=0,\ldots,n-r$.
Hence, we can write (\ref{eq:qi2}) as (\ref{eq:exp1-2}), setting
$w_{k}=\mathfrak{w}_{i}$ and $w_{kj}=\mathfrak{w}_{ij}$. For each
$\left[y_{k}\right]$ there is thus, as shown in Section \ref{sec:4},
a unique distinguished tuple of integers $\left(W_{k},W_{k1},\ldots,W_{kr}\right)$
such that $\left[y_{k}\right]^{W_{k}}=\prod_{j=1}^{r}\nolimits\left[x_{j}\right]^{W_{kj}}$,
$W_{k}>0$ and $\gcd\left(W_{k},W_{k1},\ldots,W_{kr}\right)=1$.}

\textcolor{black}{For each adequate partition of $\mathscr{D}$, the
corresponding unique distinguished tuples of integers $\left(W_{k},W_{k1},\ldots,W_{kr}\right)$
can be arranged as rows in a matrix}
\[
\left[\begin{array}{cccc}
W_{0} & W_{01} & \cdots & W_{0r}\\
W_{1} & W_{11} & \cdots & W_{1r}\\
\cdots & \cdots & \cdots & \cdots\\
W_{n-r} & W_{\left(n-r\right)1} & \cdots & W_{\left(n-r\right)r}
\end{array}\right].
\]
Each such matrix gives one equation in (\ref{eq:qeqsyst}) of the
form
\[
y_{0}^{W_{\!\left(\ell\right)0}}=\prod_{j=1}^{r}\nolimits\!x{}_{\left(\ell\right)j}^{W_{\!\left(\ell\right)0j}}\,\Psi_{\ell}\left(\pi{}_{\left(\ell\right)1},\ldots,\pi{}_{\left(\ell\right)n-r}\right),
\]
where $\pi{}_{\left(\ell\right)k}=y{}_{\left(\ell\right)k}^{W_{\!\left(\ell\right)k}}\left(\prod_{j=1}^{r}\nolimits\!x{}_{\left(\ell\right)j}^{W_{\!\left(\ell\right)kj}}\right)^{-1}$
for $k=1,\ldots,n-r$.

In physics\emph{, }$\mathscr{E}=\left\{ \mathsf{E}_{1},\ldots,\mathsf{E}_{m}\right\} $
is the set of dimensions corresponding to a set of base units such
as those in the SI system. From a mathematical point of view, the
basis $\mathscr{E}$ for $Q/{\sim}$ can be chosen freely since the
dependencies described by (\ref{eq:exp1-1-1-1}) are not specified
in terms of a basis. Thus, different dimensional matrices may give
the same set of representations of $\Phi$. It should be noted, though,
that in physics a change of units is often associated with a change
of quantity space \cite{key-26}, affecting dependencies among dimensions
and quantities.

\section{Examples of dimensional analysis\label{sec:6}}

We start with an example of dimensional analysis in a wide sense,
and then proceed to more conventional applications with dimensional
matrices as inputs.
\begin{example}
\label{x1}Consider a dimension tuple $\mathscr{D}=\left(\left[x\right],\left[1_{Q}\right]\right)$,
where $\left[x\right]\neq\left[1_{Q}\right]$. The tuple $\left(\left[x\right]\right)$
is the only maximal independent subtuple of $\mathscr{D}$, and $\left(\left[x\right]\right)$
does not contain $\left[1_{Q}\right]$, so $\mathscr{D}$ has exactly
one adequate partition, 
\[
\left(\left[x\right]\right),\;\emptyset,\;\left(\left[1_{Q}\right]\right).
\]
By Theorem \ref{pimain} any corresponding (necessarily complete)
quantity function of the form $\Phi:\left[x\right]\rightarrow\left[1_{Q}\right]$
that admits a covariant scalar representation is a constant function
since it has a representation of the form
\[
\Phi\left(x\right)=x^{0}\Psi\left(\right)=k,
\]
where $k\in\left[1_{Q}\right]$ is a constant. Had $\Phi$ not admitted
a covariant scalar representation then nothing could have been said
about how $\Phi\left(x\right)$ depends on $x$, so we obtain specific
information about $\Phi$ from this assumption.
\end{example}
\begin{example}
\label{x2}Assume that the time of oscillation $t$ of a pendulum
depends on its length $\ell$, the mass of the bob $m$, the amplitude
of the oscillation $\theta$ (an angle) and the constant of gravity
$g$, that is, $t^{W}=\Phi(\ell,m,\theta,g)$. Let the dependencies
among the dimensions in $\mathscr{D}=\left(\left[\ell\right],\left[m\right],\left[\theta\right],\left[g\right],\left[t\right]\right)$
be given by the dimensional matrix
\[
\begin{array}{cccccc}
 & \left[\ell\right] & \left[m\right] & \left[\theta\right] & \left[g\right] & \left[t\right]\\
\mathsf{L} & 1 & 0 & 0 & 1 & 0\\
\mathsf{T} & 0 & 0 & 0 & -2 & 1\\
\mathsf{M} & 0 & 1 & 0 & 0 & 0
\end{array}.
\]
There is only one maximal independent subtuple of $\mathscr{D}$ not
containing $\left[t\right]$, namely $\left(\left[\ell\right],\left[m\right],\left[g\right]\right)$.
The corresponding adequate partition of $\mathscr{D}$ is 
\[
\left(\left[\ell\right],\left[m\right],\left[g\right]\right),\quad\left(\left[\theta\right]\right),\quad\left(\left[t\right]\right),
\]
and the dimensional model is
\[
\begin{array}{cccc}
 & \left[\ell\right] & \left[m\right] & \left[g\right]\\
\mathsf{L} & 1 & 0 & 1\\
\mathsf{T} & 0 & 0 & -2\\
\mathsf{M} & 0 & 1 & 0
\end{array},\begin{array}{c}
\left[\theta\right]\\
0\\
0\\
0
\end{array},\begin{array}{c}
\left[t\right]\\
0\\
1\\
0
\end{array}.
\]
Clearly, $\left[t\right]^{2}=\left[\ell\right]^{1}\left[m\right]^{0}\left[g\right]^{-1}=\left[\ell\right]^{1}\left[g\right]^{-1}$
and $\left[\theta\right]^{1}=\left[\ell\right]^{0}\left[m\right]^{0}\left[g\right]^{0}=\left[1_{Q}\right]$,
so we have
\begin{equation}
t^{2}=\ell g^{-1}\Psi\left(\theta/1_{Q}\right)=\ell g^{-1}\Psi\left(\theta\right).\label{eq:pendel}
\end{equation}
Thus, the time of oscillation does not really depend on the mass of
the bob. 

Alternatively, choose a local basis $E$ and let $t$, $\ell$, $g$,
$\theta$ denote the real numbers $\mu_{E}\left(t\right),\mu_{E}\left(\ell\right),\mu_{E}\left(g\right),\mu_{E}\left(\theta\right)$,
respectively. Then (\ref{eq:pendel}) becomes $t^{2}=\ell g^{-1}\uppsi\left(\theta\right)$,
where $\uppsi:\mathbb{R}\rightarrow\mathbb{R}$ is a covariant scalar
representation of $\Psi$, and from the assumption that $t,\ell,g>0$
follows that $\uppsi\left(\theta\right)>0$ and that (\ref{eq:pendel})
is equivalent to
\[
t=\sqrt{\ell g^{-1}\uppsi\left(\theta\right)}.
\]
It can be shown \cite{key-25} that $\sqrt{\uppsi\left(\theta\right)}\rightarrow2\pi$
as $\theta\rightarrow0$, so for small oscillations this equation
simplifies to $t=2\pi\sqrt{\ell g^{-1}}$.
\end{example}
\begin{example}[based on a problem in Gibbings \cite{key-9}, pp. 107\textendash 108]
\label{x3}Assume that the force $F$ between two parallel equal-sized
plates of a capacitor depends on the area \emph{a} of each plate,
the distance \emph{z} between the plates, the permittivity $\epsilon$
of the dielectric layer, and the potential difference $\varphi$,
that is, $F^{W}=\Phi\left(a,z,\epsilon,\varphi\right)$, so that $\mathscr{D}=\left(\left[a\right],\left[z\right],\left[\epsilon\right],\left[\varphi\right],\left[F\right]\right)$.
In the $\mathsf{L,T,M,I}$ system of dimensions we have the following
dimensional matrix
\[
\begin{array}{cccccc}
 & \left[a\right] & \left[z\right] & \left[\epsilon\right] & \left[\varphi\right] & \left[F\right]\\
\mathsf{L} & 2 & 1 & -3 & 2 & 1\\
\mathsf{T} & 0 & 0 & 4 & -3 & -2\\
\mathsf{M} & 0 & 0 & -1 & 1 & 1\\
\mathsf{I} & 0 & 0 & 2 & -1 & 0
\end{array}.
\]
Corresponding to the two adequate partitions of $\mathscr{D}$ for
which $\mathscr{C}=\left(\left[F\right]\right)$, 
\[
\left(\left[a\right],\left[\epsilon\right],\left[\varphi\right]\right),\;\left(\left[z\right]\right),\;\left(\left[F\right]\right),\qquad\left(\left[z\right],\left[\epsilon\right],\left[\varphi\right]\right),\;\left(\left[a\right]\right),\;\left(\left[F\right]\right),
\]
there are two dimensional models with $F$ as the dependent variable,
\[
\begin{array}{cccc}
 & \left[a\right] & \left[\epsilon\right] & \left[\varphi\right]\\
\mathsf{L} & 2 & -3 & 2\\
\mathsf{T} & 0 & 4 & -3\\
\mathsf{M} & 0 & -1 & 1\\
\mathsf{I} & 0 & 2 & -1
\end{array},\begin{array}{c}
\left[z\right]\\
1\\
0\\
0\\
0
\end{array},\begin{array}{c}
\left[F\right]\\
1\\
-2\\
1\\
0
\end{array},\qquad\begin{array}{cccc}
 & \left[z\right] & \left[\epsilon\right] & \left[\varphi\right]\\
\mathsf{L} & 1 & -3 & 2\\
\mathsf{T} & 0 & 4 & -3\\
\mathsf{M} & 0 & -1 & 1\\
\mathsf{I} & 0 & 2 & -1
\end{array},\begin{array}{c}
\left[a\right]\\
2\\
0\\
0\\
0
\end{array},\begin{array}{c}
\left[F\right]\\
1\\
-2\\
1\\
0
\end{array},
\]
and these models give the following equations
\[
\begin{cases}
F=\epsilon\varphi^{2}\,\psi_{1}\left(z^{2}/a\right), & \left(E1\right)\\
F=\epsilon\varphi^{2}\,\psi_{2}\left(a/z^{2}\right). & \left(E2\right)
\end{cases}
\]
Note that $\psi_{1}\left(x\right)=\psi_{2}\left(x^{-1}\right)$, so
it suffices to consider either $\psi_{1}$ or $\psi_{2}$. It is known
that $F=\frac{\epsilon\varphi^{2}a}{2z^{2}}$ if edge effects are
disregarded, meaning that $\psi_{2}\left(x\right)\sim\frac{x}{2}$
as $x\rightarrow\infty$.

The rank of the dimensional matrix is 3, and physically there is no
mechanical motion and no flow of current in this example, suggesting
that the dimensions $\mathsf{T}$, $\mathsf{M}$ and $\mathsf{I}$
be replaced by those of force and charge, $\mathsf{F}=\mathsf{L}\mathsf{T}^{-2}\mathsf{M}$
and $\mathsf{Q}=\mathsf{T}\mathsf{I}$. There is indeed a basis $\left\{ \mathsf{L},\mathsf{F},\mathsf{Q},\mathsf{X}\right\} $
for $Q/{\sim}$ such that the dimensions in $\mathscr{D}$ can be
expressed in terms of $\left\{ \mathsf{L},\mathsf{F},\mathsf{Q}\right\} $
(with only zero exponents for $\mathsf{X})$, giving the dimensional
matrix 
\[
\begin{array}{cccccc}
 & \left[a\right] & \left[z\right] & \left[\epsilon\right] & \left[\varphi\right] & \left[F\right]\\
\mathsf{L} & 2 & 1 & -2 & 1 & 0\\
\mathsf{F} & 0 & 0 & -1 & 1 & 1\\
\mathsf{Q} & 0 & 0 & 2 & -1 & 0
\end{array}.
\]
We obtain $\left(E1\right)$ and $\left(E2\right)$ again, as we should,
as a change of basis from $\left\{ \mathsf{L,T,M,I}\right\} $ to
$\left\{ \mathsf{L},\mathsf{F},\mathsf{Q},\mathsf{X}\right\} $ does
not affect the dependencies among the dimensions in $\mathscr{D}$.
\end{example}
\begin{example}
\label{x4}Let $\mathfrak{A}$ and $\mathfrak{B}$ be two bodies of
mass $a$ and $b$, respectively, and let $c$ be the combined mass
of $\mathfrak{A}$ and $\mathfrak{B}$. We are looking for a quantity
function $\Phi$ such that $c^{W}=\Phi(a,b)$, so $\mathscr{D}=\left(\left[a\right],\left[b\right],\left[c\right]\right)$.
The simple dimensional matrix is
\[
\begin{array}{cccc}
 & \left[a\right] & \left[b\right] & \left[c\right]\\
\mathsf{M} & 1 & 1 & 1
\end{array},
\]
and, corresponding to the two adequate partitions of $\mathscr{D}$
for which $\mathscr{C}=\left(\left[c\right]\right)$,
\[
\left(\left[a\right]\right),\;\left(\left[b\right]\right),\;\left(\left[c\right]\right),\qquad\left(\left[b\right]\right),\;\left(\left[a\right]\right),\;\left(\left[c\right]\right),
\]
there are two dimensional models with $c$ as dependent variable,
\[
\begin{array}{cc}
 & \left[a\right]\\
\mathsf{M} & 1
\end{array},\begin{array}{c}
\left[b\right]\\
1
\end{array},\begin{array}{c}
\left[c\right]\\
1
\end{array},\qquad\begin{array}{cc}
 & \left[b\right]\\
\mathsf{M} & 1
\end{array},\begin{array}{c}
\left[a\right]\\
1
\end{array},\begin{array}{c}
\left[c\right]\\
1
\end{array},
\]
 giving the system of equations
\[
\begin{cases}
c=a\,\Psi_{1}\left(b/a\right), & (M1)\\
c=b\,\Psi_{2}\left(a/b\right). & (M2)
\end{cases}
\]
Thus, $a\Psi_{1}\left(b/a\right)=b\Psi_{2}\left(a/b\right)$, and
if we assume by symmetry that $\Phi(a,b)=\Phi(b,a)$ then $a\Psi_{1}\left(b/a\right)=\Phi(a,b)=\Phi(b,a)=a\Psi_{2}\left(b/a\right)$,
so $\Psi_{1}=\Psi_{2}$. Setting $x=b/a$ we thus obtain a functional
equation of the form $\Psi\left(x\right)=F\left(x,\Psi\right)$, namely
\[
\Psi\left(x\right)=x\,\Psi\left(x^{-1}\right),
\]
which has solutions of the form 
\[
\Psi(x)=k\left(1_{Q}+x\right)\qquad\left(k,x\in\left[1_{Q}\right]\right).
\]

If $a\neq0_{\mathsf{M}}$ then $c^{W}=\Phi(a,b)$ can be represented
as $\left(M1\right)$, so if $\Phi(a,0_{\mathsf{M}})=a$ then $a=a\Psi(0_{\mathsf{\left[1_{Q}\right]}})=ak1_{Q}$,
so $k=1_{Q}$. Substituting $\Psi$ for $\Psi_{1}$ in $\left(M1\right)$,
we thus obtain
\[
c=k\left(a+b\right)=a+b,
\]
as one might expect. By symmetry, $\Phi(0_{\mathsf{M}},b)=b$ for
$b\neq0_{\mathsf{M}}$, so if $\Phi(0_{\mathsf{M}},0_{\mathsf{M}})=0_{\mathsf{M}}$
then $c=a+b$ for all $a,b\in\mathsf{M}$ (but negative masses may
not exist physically). 

We note that while a quantity function $\Phi$ is represented as a
product of quantities in dimensional analysis, such a product can
sometimes be rewritten as a sum of quantities. It is also clear that
addition of masses exemplifies a general principle: if $a,b,\Phi\left(a,b\right)\in\mathsf{X\neq\left[1_{Q}\right]}$
and $\Phi\left(a,b\right)=\Phi\left(b,a\right)$ then $\Phi\left(a,b\right)=k\left(a+b\right)$.
\end{example}
\begin{example}[based on a problem in Buckingham \cite{key-2}, pp. 358\textendash 359]
\label{x5}

It is assumed that the energy density $u$ at a fixed point is determined
by the strengths $E$ and $H$ of an electric field $\boldsymbol{\mathbf{E}}$
and a magnetic $\mathbf{H}$-field, respectively, as well as the permittivity
$\epsilon$ and permeability $\mu$ of the medium, that is, $u^{W}=\Phi\left(E,H,\epsilon,\mu\right)$.
The dimensional matrix corresponding to $\mathscr{D}=\left(\left[E\right],\left[H\right],\left[\epsilon\right],\left[\mu\right],\left[u\right]\right)$
in the $\mathsf{L},\mathsf{T},\mathsf{M},\mathsf{I}$ system is
\[
\begin{array}{cccccc}
 & \left[E\right] & \left[H\right] & \left[\epsilon\right] & \left[\mu\right] & \left[u\right]\\
\mathsf{L} & 1 & -1 & -3 & 1 & -1\\
\mathsf{T} & -3 & 0 & 4 & -2 & -2\\
\mathsf{M} & 1 & 0 & -1 & 1 & 1\\
\mathsf{I} & -1 & 1 & 2 & -2 & 0
\end{array}.
\]
This matrix has rank 3, and four adequate partitions of $\left(\left[E\right],\left[H\right],\left[\epsilon\right],\left[\mu\right],\left[u\right]\right)$
with $\mathscr{C}=\left(\left[u\right]\right)$ exist, namely 
\begin{gather*}
\left(\left[E\right],\left[\epsilon\right],\left[\mu\right]\right),\;\left(\left[H\right]\right),\;\left(\left[u\right]\right),\qquad\left(\left[H\right],\left[\epsilon\right],\left[\mu\right]\right),\;\left(\left[E\right]\right),\;\left(\left[u\right]\right),\\
\left(\left[E\right],\left[H\right],\left[\epsilon\right]\right),\;\left(\left[\mu\right]\right),\;\left(\left[u\right]\right),\qquad\left(\left[E\right],\left[H\right],\left[\mu\right]\right),\;\left(\left[\epsilon\right]\right),\;\left(\left[u\right]\right).
\end{gather*}
The corresponding equations are
\[
\begin{cases}
u=E^{2}\epsilon\,\Psi_{1}\left(H^{2}/\left(E^{2}\epsilon\mu^{-1}\right)\right), & \left(EM1\right)\\
u=H^{2}\mu\,\Psi_{2}\left(E^{2}/\left(H^{2}\epsilon^{-1}\mu\right)\right), & \left(EM2\right)\\
u=E^{2}\epsilon\,\Psi_{1}'\left(\mu/\left(E^{2}H^{-2}\epsilon\right)\right), & \left(EM3\right)\\
u=H^{2}\mu\,\Psi_{2}'\left(\epsilon/\left(E^{-2}H^{2}\mu\right)\right). & \left(EM4\right)
\end{cases}
\]
We note that only two distinct functions, say $\Psi_{1}$ and $\Psi_{2}$,
occur in the representations of $\Phi$. Also, $E,H,\epsilon,\mu$
occur only in the combinations $E'=\epsilon E^{2}$ and $H'=\mu H^{2}$
in these representations, so we can write $u^{W}=\Phi\left(E,H,\epsilon,\mu\right)$
as $u^{W}=\Phi'\left(E',H'\right)$, with corresponding dimension
tuple $\mathscr{D}'=\left(\left[E'\right],\left[H'\right],\left[u\right]\right)$
and dimensional matrix
\[
\begin{array}{cccc}
 & \left[E'\right] & \left[H'\right] & \left[u\right]\\
\mathsf{L} & -1 & -1 & -1\\
\mathsf{T} & -2 & -2 & -2\\
\mathsf{M} & 1 & 1 & 1\\
\mathsf{I} & 0 & 0 & 0
\end{array}.
\]
Thus, $E',H',u\in\mathsf{L}^{-1}\mathsf{T}^{-2}\mathsf{M}$ and there
are two adequate partitions of $\mathscr{D}'$ with $u$ as dependent
variable, 
\[
\left(\left[E'\right]\right),\;\left(\left[H'\right]\right),\;\left(\left[u\right]\right),\qquad\left(\left[H'\right]\right),\;\left(\left[E'\right]\right),\;\left(\left[u\right]\right),
\]
 and two corresponding representations of $\Phi'$,
\[
\begin{cases}
u=E'\,\Psi_{1}'\left(H'/E'\right), & (EM1')\\
u=H'\,\Psi_{2}'\left(E'/H'\right). & (EM2')
\end{cases}
\]
Buckingham also finds these representations, writing them as $u=\epsilon E^{2}\,\varphi_{1}\left(\frac{\mu H^{2}}{\epsilon E^{2}}\right)$
and $u=\mu H^{2}\,\varphi_{2}\left(\frac{\epsilon E^{2}}{\mu H^{2}}\right)$
\cite[p. 359]{key-2}. He then remarks:
\begin{quotation}
\noindent {\small{}Assuming }that{\small{} the complete formula is
\[
u=\frac{1}{8\pi}\left(\epsilon E^{2}+\mu H^{2}\right)
\]
 we have
\[
\varphi_{1}\left(x\right)=\varphi_{2}\left(x\right)=\frac{1+x}{8\pi}.
\]
}{\small \par}
\end{quotation}
In dimensional analysis, we are not supposed to know ''the complete
formula'' at the outset, however, so let us reverse this inference.
We have $E'\,\Psi_{1}'\left(H'/E'\right)=H'\,\Psi_{2}'\left(E'/H'\right)$,
and if we assume, invoking a physical symmetry between $\mathbf{E}$
and $\mathbf{H}$, that $\Phi'\left(E',H'\right)=\Phi'\left(H',E'\right)$
so that $E'\,\Psi_{1}'\left(H'/E'\right)=E'\,\Psi_{2}'\left(H'/E'\right)$
then we conclude that $\Psi_{1}'=\Psi_{2}'$, so setting $x=H'/E'$
we obtain the functional equation $\Psi\left(x\right)=x\,\Psi\left(x^{-1}\right)$
as in Example \ref{x4}. Again, $\Psi\left(x\right)=k\left(1_{Q}+x\right)$,
where $k,x\in\left[1_{Q}\right]$, and substituting this in $\left(EM1'\right)$
or $\left(EM2'\right)$ we obtain
\[
u=k\left(\epsilon E^{2}+\mu H^{2}\right),
\]
or $u=k\left(\epsilon E^{2}+\mu^{-1}B^{2}\right)$ if $B=\mu H$.

Working with examples, Buckingham thus recognised in \cite{key-2}
that there may be more than one way of representing $\Phi$. However,
he dismissed this observation by asserting that then the representations
are ''equivalent'' \cite[p. 359, 362]{key-2}, implying that it
suffices to consider any one of them. 
\end{example}
\begin{example}[based on a problem in Bridgman \cite{key-10}, pp.~5\textendash 8]
\label{x6} Let two bodies $\mathfrak{B}$ and $\mathfrak{b}$ with
masses $M$ and $m$ revolve around each other under influence of
their mutual gravitational attraction, as in the classical two-body
problem. Let $t$ denote the time of revolution and $d$ the mean
distance between $\mathfrak{B}$ and $\mathfrak{b}$ (or another characteristic
distance). One might want to find out how $t$ depends on $M$, $m$
and $d$, \textcolor{black}{but there is no adequate partition of
$\left(\left[M\right],\left[m\right],\left[d\right],\left[t\right]\right)$
such that $\mathscr{C}=\left(\left[t\right]\right)$ since $\left(\mathsf{L},\mathsf{T},\mathsf{M}\right)$
is an independent dimension tuple. There is thus no local basis $E$
such that $\mu_{E}\left(t^{W}\right)$ is defined, so $\Phi_{0}:\left[M\right]\times\left[m\right]\times\left[d\right]\rightarrow\left[t^{W}\right]$
does not have a ''physically meaningful'' \textendash{} that is,
covariant \textendash{} scalar representation. Bridgman suggests that
$t$ does also depend on the gravitational constant }\textcolor{black}{\emph{G}}\textcolor{black}{,
that is, $t^{W}=\Phi\left(M,m,d,G\right)$, giving the dimensional
matrix}
\[
\begin{array}{cccccc}
 & \left[M\right] & \left[m\right] & \left[d\right] & \left[G\right] & \left[t\right]\\
\mathsf{L} & 0 & 0 & 1 & 3 & 0\\
\mathsf{T} & 0 & 0 & 0 & -2 & 1\\
\mathsf{M} & 1 & 1 & 0 & -1 & 0
\end{array}.
\]
 There are two adequate partitions of $\left(\left[t\right],\left[M\right],\left[m\right],\left[d\right],\left[G\right]\right)$
with $\mathscr{C}=\left(\left[t\right]\right)$, 
\[
\left(\left[M\right],\left[d\right],\left[G\right]\right),\;\left(\left[m\right]\right),\;\left(\left[t\right]\right),\qquad\left(\left[m\right],\left[d\right],\left[G\right]\right),\;\left(\left[M\right]\right),\;\left(\left[t\right]\right),
\]
 so we obtain a system of two equations,
\[
\begin{cases}
t^{2}=M^{-1}d^{3}G^{-1}\,\Psi_{1}\left(m/M\right), & (K1)\\
t^{2}=m^{-1}d^{3}G^{-1}\,\Psi_{2}\left(M/m\right). & (K2)
\end{cases}
\]
Hence, $M^{-1}\Psi_{1}\left(m/M\right)=m^{-1}\Psi_{2}\left(M/m\right)$,
and for symmetry reasons we may assume that $\Phi\left(M,m,d,G\right)=\Phi\left(m,M,d,G\right)$
so that $M^{-1}\Psi_{1}\left(m/M\right)=M^{-1}\Psi_{2}\left(m/M\right)$.
This implies $\Psi_{1}=\Psi_{2}$, so setting $x=m/M$ we obtain the
functional equation
\[
\Psi\left(x\right)=x^{-1}\,\Psi\left(x^{-1}\right).
\]
This functional equation has solutions of the form 
\[
\Psi(x)=k\left(1_{Q}+x\right)^{-1}\qquad\left(k,x\in\left[1_{Q}\right]\right),
\]
and substituting this in either ($K1$) or ($K2$) gives
\begin{equation}
t^{2}=kd^{3}G^{-1}(M+m)^{-1}.\label{eq:kepler}
\end{equation}

Here, $k$ and $G$ are constants, so if $M+m\approx M$ and $M$
is constant (several planets orbit the sun) then, approximately, $t^{2}\propto d^{3}$;
this is Kepler's third law. 

As before, (\ref{eq:kepler}) can be interpreted both as a quantity
equation and as a scalar equation, and the latter can also be written
as 
\begin{equation}
t=c\sqrt{d^{3}G^{-1}(M+m)^{-1}}.\label{eq:scalOrb}
\end{equation}

It is worth pointing out that Bridgman considered only one equation,
namely 
\[
t=\frac{r^{\frac{3}{2}}}{G^{\frac{1}{2}}m_{2}^{\frac{1}{2}}}\phi\left(\frac{m_{2}}{m_{1}}\right),
\]
corresponding to ($K1$) with $\phi\left(m_{2}/m_{1}\right)=\Psi_{1}\left(m_{1}/m_{2}\right)$
or ($K2$) with $\phi\left(m_{2}/m_{1}\right)=\Psi_{2}\left(m_{1}/m_{2}\right)$.
The basic reason why Bridgman was not able to derive the much more
informative equation (\ref{eq:scalOrb}) was that he did not reflect
on the possibility that the original function could have more than
one representation, and as a consequence he did not reflect on what
inferences could be drawn from symmetries between different representations.
Remarkably, the same restricted way of thinking still dominates dimensional
analysis, next to a century after the appearance of Bridgman's\emph{
}classic \cite{key-10}, but mathematics does not always move quickly.
\end{example}

\section{Ex nihilo nihil fit\label{sec:7}}

Some examples of dimensional analysis give the impression that it
can be used to derive physical laws ''out of nothing''. This is
of course an illusion; while the assumptions may be hidden or so intuitive
as to be overlooked, assumptions there are. Recall, in particular,
that the application of Theorem \ref{pimain} to dimensional analysis
is based on the premise that the quantity function $\Phi$ posited
has a covariant scalar representation. Thus, to assume that $\zeta=\upphi\left(\zeta_{1},\ldots,\zeta_{n}\right)$
is a ''physically meaningful'' scalar equation, so that we can obtain
an equation $\zeta=\prod_{j=1}^{r}\xi_{j}^{W_{0j}}\uppsi\left(\pi{}_{1},\ldots,\pi{}_{n-r}\right)$
by dimensional analysis, is to assume that $\upphi$ is a covariant
scalar representation of a quantity function $\Phi.$ This assumption,
underlying dimensional analysis, is a general covariance principle
about the equivalence of certain reference frames, defined by corresponding
systems of units of measurement.

Also, stronger assumptions will in general lead to stronger results.
For example, to derive (\ref{eq:kepler}) in Example \ref{x6} we
had to assume that $\Phi\left(M,m,d,G\right)=\Phi\left(m,M,d,G\right)$.
The symmetry between $\mathfrak{B}$ and $\mathfrak{b}$ motivating
this assumption would in turn seem to require that a description of
the situation in which $\mathfrak{B}$ is at rest and $\mathfrak{b}$
revolves around $\mathfrak{B}$ is equivalent to a description in
which $\mathfrak{b}$ is at rest and $\mathfrak{B}$ revolves around
$\mathfrak{b}$, so that we may reverse the roles of $\mathfrak{B}$
and $\mathfrak{b}$. This is in fact a deep symmetry assumption, related
to Mach's principle. 

\section{Conclusion}

In modern algebra, an abstract vector is the real thing while its
coordinates is the shadow of the vector, dependent not only on the
vector itself but also on the location of the source of illumination
\textendash{} a particular choice of basis. The notion of quantity
calculus \cite{key-2-1} derives from an analogous duality between
the quantities measured and the numbers serving as their measures,
and an idea of the primacy of quantities. From this perspective, we
need a new $\pi$ theorem which is about quantities, not primarily
about numbers. Theorems \ref{pimain} and \ref{piAlt} are reformulations
of the $\pi$ theorem designed to meet this need.

The proposed reconceptualisation requires a mathematical definition
of the notion of a quantity. As defined here, a quantity is simply
an element of a quantity space, defined, in turn, in \cite{key-7}
and Section \ref{sec:2}. The operations and identities characterising
quantity spaces define a quantity calculus suitable for formulating
a quantity counterpart of the $\pi$ theorem. 

As we have seen, the fact that a theorem about scalars (measures),
the $\pi$ theorem, is replaced by a theorem about quantities as a
basis for dimensional analysis does not mean that dimensional analysis
can no longer yield scalar equations, but the new way of thinking
leads to other consequences and benefits that concern the logic, results
and practice of dimensional analysis.

Despite its long history and widespread use, dimensional analysis
remains somewhat unsettled and controversial. There is apparently
still room for foundational work, and the rigorous algebraic foundation
provided by the theory of quantity spaces and Theorem \ref{pimain}
would seem to clarify essential mathematical aspects of dimensional
analysis. In particular, a precise mathematical meaning is given to
the notion of a ''physically meaningful'' scalar equation.

Secondly, the new approach to dimensional analysis redefines the results
that can be expected from it. Instead of representing $\phi$ such
that $\phi\left(t_{1},\ldots,t_{n}\right)=0$ by $\psi$ such that
$\psi\left(\pi_{1},\ldots,\pi_{n-r}\right)=0$, as in Vaschy's and
Buckingham's original $\pi$ theorem, we prefer to represent $\phi$
such that $t^{W}=\phi\left(t_{1},\ldots,t_{n}\right)$ by means of
$\psi$ such that $t^{W}=\prod_{j=1}^{r}\nolimits\!x{}_{j}^{W_{j}}\,\psi\left(\pi_{1},\ldots,\pi_{n-r}\right)$.
Even more significantly, we derive an equation system of representations
rather than a single representation. Such an equation system can then
make it possible to extract additional information about $\phi$ that
could not have been obtained by means of any single equation, or even
all equations considered individually.

Finally, it should be noted that we have presented an algorithmic
technique for dimensional analysis. We have described rules that make
it possible to generate unique equation systems from a dimensional
matrix, obtaining one system for each mathematically legitimate choice
of dependent variable. It would not be difficult to create a computer
implementation of this algorithm.

\end{document}